\theoremstyle{plain}
\newtheorem{theorem}{Theorem}
\theoremstyle{definition}
\title{Modeling Transmission Dynamics of Tuberculosis: Parameter Estimation and Sensitivity Analysis Using Real-World Data}
\author{Moksina Seyid$^1$ \and Abdu Mohammed Seid$^1$ \and Yassin Tesfaw Abebe$^{1,2}$
}
\date{
$^1$Bahir Dar University, Ethiopia  \\ 
$^2$Mekdela Amba University, Ethiopia \\ 
}
\begin{document}
\maketitle
	
\begin{abstract}
Tuberculosis (TB) continues to pose a major public health challenge, particularly in high-burden regions such as Ethiopia, necessitating a more profound understanding of its transmission dynamics. In this study, we developed an SVEITRS compartmental model to investigate the transmission dynamics of TBs, utilizing real data from Ethiopia from 2011–2021. Model parameters were estimated via two methods: nonlinear least squares and maximum likelihood, with maximum likelihood providing more accurate and reliable results, as confirmed by a test case. The model's stability analysis indicated that there is a disease-free equilibrium in areas where the basic reproduction number ($\mathscr{R}_0$) is less than one. The results suggest that optimal conditions could lead to the elimination of TB. On the other hand, there is an endemic equilibrium in areas where $\mathscr{R}_0$ is greater than one, which means that the disease is still present. Sensitivity analysis revealed important factors affecting TB levels: higher natural death rates, vaccination rates, treatment rates, and disease-related death rates lower TB cases, whereas higher recruitment rates, contact rates, infection rates, and loss of vaccine protection increase its spread. These findings highlights to the necessity of enhancing vaccination, treatment, and recovery strategies while addressing drivers of transmission to achieve TB control in Ethiopia. This study provides useful advice for guiding TB control efforts and public health interventions in Ethiopia and similar regions.

\vspace{.25in}
\noindent\textbf{Keywords:} TB; basic reproduction number; sensitivity analysis; stability;  parameter estimation; Ethiopia
\end{abstract}

\section{Introduction}
Mycobacterium tuberculosis is the bacterium that causes tuberculosis (TB), a chronic infectious disease. TB bacteria typically target the lungs (pulmonary TB), but they can also impact other body parts such as the brain, spine, bones, joints, central nervous system, circulatory system, and even the skin. Lung infections account for 75\% of TB cases. \cite{okolo2023mathematical}. When someone with tuberculosis coughs, sneezes, or speaks, bacteria are released into the air, which can potentially infect those in close proximity. This infectious disease spreads through the air and is still a leading cause of morbidity and death in many worldwide. Prior to the coronavirus disease 2019 (COVID-19) pandemic, TB was the most common infectious agent-related cause of death, surpassing HIV/AIDS \cite{world2021global,world2022global}. As a result, it was the second most common cause of infectious disease-related deaths, impacting numerous nations, particularly those in low-resource regions of Asia and Africa \cite{kasznia2023global}. According to estimates by \cite{world2024global}, TB killed almost twice as many people as HIV/AIDS did in 2023, and after COVID-19 temporarily surpassed it as the most common infectious cause of death for three years, it most likely reclaimed that title. An estimated 25\% of people worldwide are thought to have TB. \cite{world2023global}. Fever, weight loss, chest pain, blood in the cough, constant fatigue, night sweats, and appetite loss are signs of active tuberculosis \cite{diekmann2010construction}. According to Global TB 2023 report by the World Health Organization (WHO) \cite{world2023global}, the number of individuals with active TB increased from 10.0 million in 2020 to 10.3 million in 2021 and then to an estimated 10.6 million in 2022. Adults (aged $\geq$ 15 years) accounted for 88\% of these cases. Although tuberculosis affects people of all ages worldwide, adults are most likely to contract the disease in each nation. TB kills one person worldwide every 15 seconds, and the disease infects one person every second \cite{world2023global}. The WHO has therefore set a goal that has been embraced by all UN and WHO member states, requiring immediate action to end the global TB epidemic by 2030.

In Ethiopia, TB is one of the primary public health problems, killing more than 19 thousand people in 2022, as reported in the WHO 2023 report \cite{world2023global}. Poverty, undernutrition, overcrowded living conditions, and a lack of knowledge about the disease are known to increase the risk of bacteria1 spread and the risk of developing the disease.

Tuberculosis is treated by killing bacteria with antibiotics. The treatment usually lasts at least 6 months and sometimes longer, up to 24 months. Different antibiotics are used to increase effectiveness while preventing bacteria from becoming resistant to medicines. The most common medicines used to treat tuberculosis are isoniazid, rifampin, ethambutol, and pyrazinamide \cite{conradie2020treatment,dartois2022anti}. Bacillus Calmette-Gu\'erin (BCG) is the only licensed vaccine against TB. Some field trials of the vaccine have indicated protection rates as high as 70\% to 80\%, whereas others have shown it to be a completely ineffective vaccine for the prevention of TB \cite{castillo1996}. However, BCG has variable efficacy and cannot completely prevent TB infection and transmission \cite{plotkin2008vaccines, colditz1994efficacy, qu2021bcg}. Protecting against the infection itself is not the primary goal of the vaccine. Instead, the vaccine somewhat lowers the risk of contracting the disease's active form, but its effects are temporary and wear off over time. Therefore, the vaccine alone will not be enough to stop the disease's epidemic \cite{plotkin2008vaccines, colditz1994efficacy,qu2021bcg}. To efficiently control and prevent infectious diseases, one needs to be adequately informed about the mechanisms of the spread and transmission dynamics of such infectious diseases. 

Many studies have been performed on modeling the transmission dynamics of TBs to better understand transmission dynamics and propose corresponding possible control strategies, mainly following \citeauthor{waaler1962}'s 1962 study on the use of mathematical models for TB dynamics and  \citeauthor{castillo1996}'s 1996 study on modeling TB dynamics. Mathematical models in epidemiology provide convenient frameworks for improving our understanding of the patterns of infection and the effects of preventive measures on tuberculosis \citep{klotz2013forecast, mutai2021modeling}. In the last two decades, numerous mathematical models have been developed to study the transmission dynamics of TB across different regions in the world, including Ethiopia. The basic models, like susceptible-infectious-recovered (SIR) and its variants (SEIR, SVEIR), have been changed and expanded to include new factors such as treatment, vaccination, and disease-related deaths. This has given us a more complete picture of how TB spreads and how to prevent it \cite{adebiyi2016mathematical}. \citeauthor{nainggolan2013} \cite{nainggolan2013}, for example, presented an SVEIR model of tuberculosis transmission by specifically considering the total number of people who have recovered, whether naturally or as a result of vaccination. Using numerical examples, they demonstrated how vaccination can prevent the disease from spreading.

\citeauthor{khan2019} \cite{khan2019} and others studied on how tuberculosis spreads in Khyber Pakhtunkhwa, Pakistan. They used a deterministic model called Susceptible slow-Exposed fast-Exposed Infected Treated Recovered ($\mathrm{SE_1E_2ITR}$) to do this. They used local data from 2002-2017 for model parameter estimation and obtained a basic reproduction number, $\mathscr{R}_O= 1.38$, indicating the persistence of the disease in the region. However, many such studies have either relied on data from countries with more robust health infrastructure or have focused on theoretical applications rather than incorporating region-specific epidemiological data. Therefore, in many studies, especially in places with many TB cases, such as Ethiopia, models do not fully show how the disease spreads, what happens when people who have recovered lose their immunity over time, how important vaccinations are, or how treatment interventions help lower the disease burden. Some studies, such as those by \citeauthor{habteyohhanis2019analysis} and \cite{dauda2020analyzing}, provided us useful information about how TB spreads in certain Ethiopian communities. These studies focused on the basic reproduction number $\mathscr{R}_0$ and revealed patterns of endemic transmission. However, these studies did not incorporate factors such as the disease-induced death rate, immune loss in recovered individuals, or vaccination class, which are crucial for accurately capturing TB transmission dynamics throughout Ethiopia.

While the literature has made significant contributions, there remains a critical gap in modeling the full spectrum of TB dynamics in resource-limited such as Ethiopia, where healthcare challenges and socioeconomic factors complicate disease control. Many studies, including \citeauthor{andrawus2020mathematical} and \citeauthor{kereyu2021transmission}, have analyzed the role of treatment in reducing the TB burden but often overlooked the impact of vaccination or immune loss in recovered individuals. Early detection, treatment, and vaccination are some of the control strategies that have been explored in studies such as \citeauthor{haso2spread} and \citeauthor{sulayman2021sveire}. However, these studies do not consider all the important components, such as the treatment class or the resusceptibility of recovered individuals. Furthermore, many models, such as those by \citeauthor{mettle2020modelling}, do not explore the disease-induced death rate, a critical factor for regions with high mortality rates. Moreover, while some studies have attempted parameter estimation, they often fall to consider the unique temporal and socioeconomic factors that shape TB transmission in settings such as Ethiopia. These limitations underscore the need for more comprehensive models that incorporate these missing dynamics to offer more effective TB control strategies in high-burden areas such as Ethiopia.

To address these gaps, our study develops and analyzes a more comprehensive SVEITRS model that includes the loss of immunity in recovered individuals, treatment interventions, and the disease-induced death rate. This model not only adds the vaccination class, which was missing in previous models but also, introduces a more detailed representation of the various interventions needed to control TB spread effectively. For the estimation of the model parameters and the solution of the model, we used the corresponding Ethiopian data for the period 2011-2021 from \cite{worldbank_tb_ethiopia, worldbank2023, worldometers2024, tbdiah, tradingeconomics2023, worldbank_tuberculosis, tradingeconomics}. By utilizing both nonlinear least squares and maximum likelihood estimation methods for parameter estimation, we compare the accuracy of these techniques in capturing local TB dynamics. After the parameters are estimated, we use the model to predict the spread of tuberculosis and determine the likely number of infections and fatalities. Additionally, we use a synthetic example case to validate the model's performance in terms of parameter estimation and prediction accuracy, ensuring that the model can provide reliable insights for controlling TB in the Ethiopian context. This study thus fills a crucial gap in the literature by providing a more accurate, data-driven framework for TB transmission dynamics in Ethiopia, one that accounts for the full range of epidemiological factors and offers new insights into potential control strategies.

The remainder of this paper is arranged as follows. Section \ref{sec_two} presents the models used to study tuberculosis. We analyze the models in Section \ref{sec_three}. We estimate the values of the parameters in section \ref{sec_seven}. The results and discussion are presented in Section \ref{sec_eight}. The final conclusion and remarks are provided in section \ref{sec_nine}.

\section{Model Formulation }\label{sec_two}
In this section, we discuss the epidemiological model formulation for the transmission dynamics of TB in Ethiopia. We consider the following simplifying assumption in the construction of the model:
\begin{enumerate}
\item The total population is divided into six distinct compartments: susceptible $S(t)$, vaccinated $V(t)$, exposed $E(t)$, infectious $I(t)$, treated, $T(t)$ and recovered $R(t)$.
\item Susceptible individuals are people of all ages who can contract tuberculosis. 
The susceptible population is assumed to increase in size through the recruitment of individuals into the population at a constant rate $\Lambda$.
\item A fraction of the susceptible population is vaccinated, and individuals who have received vaccinations have partial immunity against TB. Some susceptible individuals who have been successfully vaccinated lose their vaccine-induced immunity and join the susceptible compartment again \cite{sulayman2021sveire}. Thus, the efficacy of the vaccine is assumed to be incomplete such that some portion of vaccinated individuals will be susceptible to bacteria at a rate $\theta$.
\item An infected individual has a latency period before becoming infectious. Thus, exposed individuals are TB-infected but not yet infectious and do not show symptoms of the disease. 
\item We assume that after interaction with an infected individual with a susceptible one with a contact rate $\beta$, a susceptible person moves into the exposed class.  Infectious individuals have the ability to spread the disease. 
\item Individuals in the population follow homogeneous mixing, where every susceptible individual has an equal likelihood of being infected when they come into sufficient contact with infectious individuals.
\item Treated individuals have received treatment for the disease, and recovered individuals are people who are cured of the disease.
\item The infectious class and treated class have an additional death rate due to the disease, and we denote these mortality rates as $\tau$ and $\omega$, respectively. 
\item Due to loss of immunity, we assume that recovered individuals move to the susceptible class at rate $\eta$. 
\item Treated individuals are recovered at a constant rate $\alpha$, and we assume that the natural death rate  is the same for all classes and is denoted by $\mu$.
\end{enumerate}

On the basis of the assumptions, descriptions, and interrelationships between the states and the parameters defined above, we illustrate the transmission dynamics of the TB model via a flowchart of the model dynamics among the different compartments, as shown in Figure \ref{fig:D_123}.
\begin{figure}[h]
\centering
\includegraphics[width=1\linewidth]{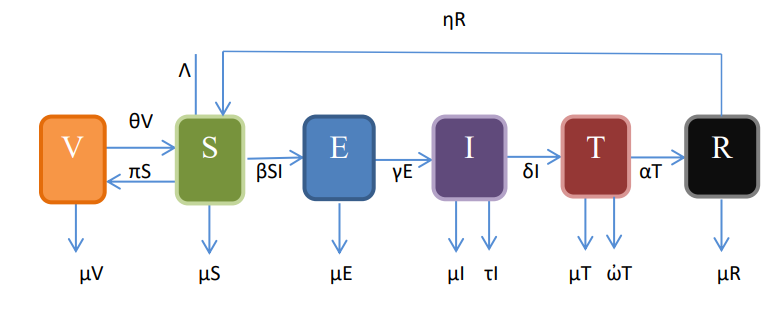}
\caption{Flowchart for the TB transmission model.}
\label{fig:D_123}
\end{figure}

According to the above hypotheses, the dynamics of disease within a specific population can be represented via the following set of nonlinear differential equations:
\begin{equation}\label{eq:4.1}
\begin{split}
\frac{dS}{dt} &= \Lambda + \eta R + \theta V - (\beta I + \pi + \mu)S \\
\frac{dV}{dt} &= \pi S - (\theta + \mu)V \\
\frac{dE}{dt} &= \beta SI - (\gamma + \mu)E \\
\frac{dI}{dt} &= \gamma E - (\delta + \tau + \mu)I \\
\frac{dT}{dt} &= \delta I - (\alpha + \omega + \mu)T \\
\frac{dR}{dt} &= \alpha T - (\eta + \mu)R
\end{split}
\end{equation}
with nonnegative initial conditions given by $S(0) = S_0 > 0$,~ $V(0) = V_0 \geq 0$,~ $E(0) = E_0 \geq 0$,~ $I(0) = I_0 \geq 0$,~ $T(0) = T_0 \geq 0$, \&~ $R(0) = R_0\geq 0$. All model parameters of the system in equation \eqref{eq:4.1} are assumed to be positive for all times $t > 0$ and are described in Table \ref{tab.1}.
\begin{table}[htbp]
\centering
\begin{adjustbox}{width=\textwidth}
\begin{tabular}{cllllll}
\hline
\textbf{Parameter} &&&&&& \textbf{Description}  \\
\hline
$\varLambda$ &&&&&& Recruitment rate of susceptible individuals due to birth. \\
\hline
$\beta$ &&&&&& Contact rate. \\
\hline
$\gamma$ &&&&&& The rate at which exposed individuals become infectious. \\
\hline
$\mu$ &&&&&& Natural death rates of all individuals. \\
\hline
$\pi$ &&&&&& Vaccination rate. \\
\hline
$\eta$ &&&&&& The rate at which the recovery loses its immunity.  \\
\hline
$\theta$ &&&&&& Loss rate of immunity by vaccinated individuals. \\
\hline
$\delta$ &&&&&& Treatment rate for infectious individuals. \\
\hline
$\tau$ &&&&&& TB-induced death rate for infected individuals.\\
\hline
$\alpha$ &&&&&& Recovery rate for treated individuals.  \\
\hline
$\omega$ &&&&&& TB-induced death rate for treated individuals. \\
\hline
\end{tabular}
\end{adjustbox}
\caption{Model parameters description}
\label{tab.1}
\end{table}
\section{Model Analysis}\label{sec_three}
\subsection*{Positivity of Solutions}
The proposed TB transmission dynamics model provided in equation \(\left( \ref{eq:4.1} \right)\) is epidemiologically meaningful if we can show that the solution of the system with nonnegative initial conditions will remain nonnegative for all $t > 0$.
\begin{theorem}
Let $S(0)\geq 0,~ V(0) \geq 0, ~ E(0) \geq  0,~ I(0) \geq  0,~ T(0) \geq 0$, and $R(0) \geq 0$, then the solutions $S(t), V(t), E(t), I(t), T(t), R(t)$ of the system of equations in \eqref{eq:4.1} are nonnegative for all $ t \geq 0$.
\end{theorem}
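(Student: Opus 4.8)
The plan is to prove that the closed nonnegative orthant $\mathbb{R}^6_{\ge 0}$ is forward invariant under the flow of \eqref{eq:4.1}. First I would record that the right-hand side of \eqref{eq:4.1} is polynomial in $(S,V,E,I,T,R)$, hence $C^1$ and locally Lipschitz, so Picard--Lindel\"of guarantees a unique solution on some maximal interval $[0,t_{\max})$; the only thing left is to show this solution never leaves $\mathbb{R}^6_{\ge 0}$.

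The mechanism I would use is an integrating-factor (Gr\"onwall-type) bound for each compartment. Every equation has the form $\dot X = g_X - k_X X$, where $k_X$ collects the nonnegative outflow rates and $g_X$ is a sum of other state variables with nonnegative coefficients: explicitly $g_S=\Lambda+\eta R+\theta V$, $g_V=\pi S$, $g_E=\beta SI$, $g_I=\gamma E$, $g_T=\delta I$, $g_R=\alpha T$, with loss coefficients $k_S=\beta I+\pi+\mu$, $k_V=\theta+\mu$, $k_E=\gamma+\mu$, $k_I=\delta+\tau+\mu$, $k_T=\alpha+\omega+\mu$, $k_R=\eta+\mu$. As long as all coordinates are nonnegative, each $g_X\ge 0$, so $\dot X\ge -k_X X$; multiplying by $\exp\!\big(\int_0^t k_X(s)\,ds\big)$ and integrating gives $X(t)\ge X(0)\exp\!\big(-\int_0^t k_X(s)\,ds\big)\ge 0$.

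The step I expect to be the genuine obstacle is that these six bounds are coupled: the estimate for $S$ requires $I\ge 0$, that for $E$ requires $S,I\ge 0$, and so on around a cycle, so they cannot be discharged one variable at a time. To close this I would run a single first-exit-time argument. Assuming positivity fails, set $t_1=\inf\{t\in(0,t_{\max}): \text{some coordinate is negative}\}$; by continuity all coordinates are nonnegative on $[0,t_1]$ and some coordinate $X$ satisfies $X(t_1)=0$. Because the others are nonnegative at $t_1$, the model has the quasi-positivity (subtangential) property $S=0\Rightarrow\dot S=\Lambda+\eta R+\theta V\ge\Lambda>0$, $\dot V=\pi S\ge 0$, $\dot E=\beta SI\ge 0$, $\dot I=\gamma E\ge 0$, $\dot T=\delta I\ge 0$, $\dot R=\alpha T\ge 0$; together with the integrating-factor lower bounds, which are valid throughout $[0,t_1]$ since every source term is nonnegative there, this prevents $X$ from becoming negative at or immediately after $t_1$ and contradicts the choice of $t_1$ (this is precisely the hypothesis of Nagumo's invariance theorem). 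Hence no such $t_1$ exists, the solution remains in $\mathbb{R}^6_{\ge 0}$, and all six components stay nonnegative for every $t\ge 0$, as claimed.
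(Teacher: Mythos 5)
Your proposal is correct, and its core mechanism is the same one the paper uses: for each compartment, drop the nonnegative inflow terms to get $\dot X \ge -k_X X$ and integrate to obtain $X(t)\ge X(0)\exp\bigl(-\int_0^t k_X(s)\,ds\bigr)\ge 0$. Where you genuinely go beyond the paper is in recognizing that these six estimates are circular as stated: discarding $\eta R+\theta V$ from the $S$-equation presupposes $R,V\ge 0$, and discarding $\beta S I$ from the $E$-equation presupposes $S,I\ge 0$, which is precisely what is being proved. The paper's proof simply asserts the inequalities one compartment at a time and never addresses this coupling, so as written it is incomplete; your first-exit-time argument (equivalently, Nagumo's quasi-positivity criterion: on $[0,t_1]$ all coordinates are nonnegative by minimality of $t_1$, so every source term is nonnegative there and the integrating-factor bound legitimately applies up to and through $t_1$) is exactly the missing step that makes the argument rigorous. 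A second, smaller improvement: you correctly keep the time-dependent coefficient inside the integral, $\exp\bigl(-\int_0^t(\beta I(s)+\pi+\mu)\,ds\bigr)$, whereas the paper writes $e^{-(\beta I+\pi+\mu)t}$ as if $I$ were constant. In short, same decomposition and same per-compartment bound, but your version supplies the invariance argument that the paper's proof tacitly assumes.
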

\begin{proof}
Consider $(S(t), V(t), E(t), I(t), T(t), R(t))$ to be the solution of the system of equations with the initial conditions $(S(0), V(0), E(0), I(0), T(0), R(0))$. Now, from the first equation of the system in \ref{eq:4.1}, we initially have		
\begin{equation*}
\displaystyle\frac{dS}{dt} = \Lambda + \eta R + \theta V-(\beta I + \pi + \mu)S  \geq - (\beta I + \pi + \mu)S.
\end{equation*}	
Using techniques of separation of variables, we obtain
\begin{equation}\label{eq.thm1}
\displaystyle\frac{dS}{S}\geq - (\beta I + \pi + \mu)dt.
\end{equation}
Integrating both sides of Equation \eqref{eq.thm1} and solving for $S(t)$ yields
\begin{equation*}
S(t)\geq S(0) \cdot e^{\displaystyle-\left(\beta I + \pi + \mu\right)t} > 0.
\end{equation*}
Similarly, for $V(t), E(t), I(t), T(t)$ and $R(t)$, we obtain
\begin{align*}
V(t) &\geq V(0) \cdot e^{\displaystyle -\left(\theta + \mu\right)t} \geq 0,~ E(t) \geq E(0) \cdot e^{\displaystyle -\left(\gamma +  \mu\right)t} \geq 0,\\
I(t) &\geq I(0) \cdot e^{\displaystyle -\left(\delta + \tau + \mu\right)t} \geq 0,~T(t) \geq T(0) \cdot e^{\displaystyle -\left(\alpha + \omega + \mu\right)t}\geq 0,~ \text{and}\\
R(t) &\geq R(0) \cdot e^{\displaystyle -\left(\eta +  \mu\right)t}\geq 0
\end{align*} \end{proof}
\subsection*{Invariant Region (Boundedness of the solution)}
\begin{theorem}
All the solutions $N(t) = S(t) +  V(t) +  E(t) + I(t) +T(t) + R(t)$ of the system of the model equation  \eqref{eq:4.1} are bounded.
\end{theorem}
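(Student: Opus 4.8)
The plan is to derive a single scalar differential inequality for the total population $N(t)$ by summing all six equations of the system \eqref{eq:4.1}, and then to integrate that inequality to obtain an explicit upper bound. First I would add the six right-hand sides together. The key observation is that every transfer term between compartments appears once with a plus sign and once with a minus sign: the pairs $\pm\theta V$, $\pm\beta SI$, $\pm\pi S$, $\pm\gamma E$, $\pm\delta I$, $\pm\alpha T$, and $\pm\eta R$ all cancel. What survives is only the recruitment term, the uniform natural-death terms, and the two disease-induced death terms, giving
\begin{equation*}
\frac{dN}{dt} = \Lambda - \mu N - \tau I - \omega T.
\end{equation*}

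Next I would invoke the positivity theorem established above: since $I(t)\geq 0$ and $T(t)\geq 0$ for all $t\geq 0$, the disease-induced death contributions satisfy $\tau I + \omega T \geq 0$. Dropping these nonnegative terms yields the differential inequality
\begin{equation*}
\frac{dN}{dt} \leq \Lambda - \mu N.
\end{equation*}
This is a first-order linear comparison inequality, which I would solve using the integrating factor $e^{\mu t}$ (equivalently, by the standard comparison principle for ODEs). Integrating from $0$ to $t$ produces
\begin{equation*}
N(t) \leq \frac{\Lambda}{\mu} + \left(N(0) - \frac{\Lambda}{\mu}\right)e^{-\mu t}.
\end{equation*}

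From this closed form the boundedness conclusion is immediate: since $\mu>0$, the exponential term decays, so $N(t) \leq \max\!\left\{N(0),\,\tfrac{\Lambda}{\mu}\right\}$ for all $t\geq 0$ and $\limsup_{t\to\infty} N(t) \leq \tfrac{\Lambda}{\mu}$. This simultaneously shows that every solution is bounded and identifies the biologically feasible invariant region $\Omega = \{(S,V,E,I,T,R)\in\mathbb{R}^6_{+} : N \leq \Lambda/\mu\}$, into which all trajectories are eventually attracted. I do not expect a genuine obstacle here; the only points requiring care are the bookkeeping in verifying that all inter-compartmental flows cancel in the sum, and the correct appeal to positivity to justify discarding the $\tau I$ and $\omega T$ terms before applying the comparison argument.
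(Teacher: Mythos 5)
Your proof is correct and follows essentially the same route as the paper: summing the six equations to obtain $\frac{dN}{dt} = \Lambda - \mu N - \tau I - \omega T$, discarding the disease-induced death terms, and integrating the resulting comparison inequality to bound $N$ by $\Lambda/\mu$. Your execution is in fact slightly cleaner than the paper's --- you justify dropping $\tau I + \omega T$ via the positivity theorem rather than setting $\tau=\omega=0$, and your explicit solution $N(t) \leq \frac{\Lambda}{\mu} + \left(N(0)-\frac{\Lambda}{\mu}\right)e^{-\mu t}$ carries the correct integration constant, whereas the paper's intermediate expression $\Lambda - \mu N \geq N(0)e^{-\mu t}$ has the wrong constant and only yields the bound in the limit $t\to\infty$.
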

\begin{proof}
To show that the population size of each compartment is bounded, we prefer to show that the total population size at time $t$ of the whole system $N(t)$ is bounded. The total population size $N$ at any time $t$  can be given by: 
\begin{equation*}
 N(t) = S(t) + V(t) + E(t) + I(t) + T(t) + R(t).
\end{equation*}
Differentiating with respect to time $t$ and using the system of equations of the model, we obtain: 
\begin{equation}\label{eq.thm2}
\displaystyle\frac{dN}{dt} = \displaystyle\frac{dS}{dt} + \displaystyle\frac{dV}{dt} + \displaystyle\frac{dE}{dt} + \displaystyle\frac{dI}{dt} + \displaystyle\frac{dT}{dt} +
\displaystyle\frac{dR}{dt}.
\end{equation}
Substituting the differential equation \eqref{eq:4.1} into Equation \eqref{eq.thm2}, we obtain
\begin{equation}\label{eq.thm3}
\displaystyle \frac{dN}{dt} = \Lambda - \mu N - \tau I - \omega T.
\end{equation}
Using the standard comparison theorem, if there is no disease-induced death rate of individuals in the population (i.e., $\tau$, $\omega$ = 0), Equation \eqref{eq.thm3} becomes
\begin{equation}\label{eq.thm4}
\displaystyle\frac{dN}{\Lambda - \mu N} \leq dt.
\end{equation}
By integrating both sides of Equation \eqref{eq.thm4} and solving it, we obtain
\begin{equation}\label{eq.thm5}
\Lambda - \mu N \geq N(0)e^{-\mu t}.
\end{equation}	
Since $\lim_{t \to \infty} \displaystyle e^{\displaystyle -\mu t} = 0$, Equation \eqref{eq.thm5} can be simplified as
\begin{equation*}
\Lambda - \mu N \geq  0 \Longrightarrow N \leq \displaystyle\frac{\varLambda}{\mu},
\end{equation*}
where $N$ is the total number of the human population. Therefore, the feasible solution set of equation \eqref{eq:4.1} is bounded and remains in the following region
\begin{equation*}
\Omega = \bigg\{(S,V,E,I,T,R) \in \mathbb{R}^6_+:0 \leq N \leq \displaystyle\frac{\varLambda}{\mu}\bigg\}.
\end{equation*}
\end{proof}

\subsection{Analysis of Disease-Free Equilibrium $(\mathscr{E}_0)$, and Basic Reproduction Number $(\mathscr{R}_0)$}\label{sec_four}
The mathematical expression $\mathscr{E}_0$ represents for the disease-free equilibrium point (DFEP), where there are no diseases in the human population. To obtain the disease-free equilibrium point, take the derivatives in the model equation \(\left( \ref{eq:4.1} \right)\) to zero and evaluate it at $E = I = T = R = 0$. Then, we are left with susceptible and vaccinated individuals from the model equation \(\left( \ref{eq:4.1} \right)\).
\begin{equation}
\label{eq:5.1}
\left\{
\begin{aligned}
\frac{dS}{dt} &= \Lambda + \eta R + \theta V - (\beta I + \pi + \mu)S \\
\frac{dV}{dt} &= \pi S - (\theta + \mu)V .
\end{aligned}
\right.
\end{equation} 
Thus, the disease-free equilibrium point (DFEP), which is given by
\begin{equation*}
\mathscr{E}_0 = \left( \displaystyle\frac{\varLambda(\theta + \mu)}{\mu(\theta + \pi + \mu)}, \frac{\varLambda\pi}{\mu(\theta + \pi + \mu)}, 0, 0, 0, 0 \right)
\end{equation*}
  
\subsubsection{Basic Reproductive Number}
The average number of secondary infections caused by a single infectious individual in a population that is completely susceptible during the mean infectious period is known as the basic reproduction number, denoted by $\mathscr{R}_0$ \cite{heffernan2005perspectives}.To obtain it, considering the model equations containing the infected compartments in general, i.e., $E$, $I$, and $T$, as shown in equation \eqref{eq:3} below, we used the next generation matrix method \cite{van2002reproduction}. 
\begin{equation}
\label{eq:3}
\left\{
\begin{aligned}
\frac{dE}{dt} &= \beta SI - (\gamma + \mu)E \\
\frac{dI}{dt} &= \gamma E  -(\delta +\tau + \mu)I \\
\frac{dT}{dt} &= \delta I  -(\alpha +\omega + \mu)T \\
\end{aligned}
\right.
\end{equation}
Splitting the right-hand side of equation \eqref{eq:3} as $\mathscr{F}-\mathscr{V}$, where $\mathscr{F}$ is the matrix of the new infection terms and $\mathscr{V}$ is the matrix of the transition terms in compartments $E ,I~ \& ~T$. Now, computing the Jacobian matrix $\mathscr{F}$ and $\mathscr{V}$ with respect to $E, I\ \&\ T$ and evaluating them at the disease-free equilibrium point (DFEP), $\mathscr{E}_0$, gives
\begin{center}
$F=\begin{bmatrix}
0 & \quad\displaystyle\frac{\varLambda\beta(\theta + \mu)}{\mu(\theta + \pi + \mu)}&\quad0 \\ 
0 & 0 &\quad 0\\
0 & 0 &\quad 0  
\end{bmatrix}$
\quad and \quad 
$V=\begin{bmatrix}
(\gamma + \mu) & 0& 0  \\
-\gamma & (\delta + \tau + \mu)&  0 \\
0 & - \delta&  (\alpha+\omega+\mu) 
\end{bmatrix}$.
\end{center}
Hence, following \cite{van2002reproduction}, the reproduction number of the model in equation \eqref{eq:4.1}, defined as the highest eigenvalue of $FV^{-1}$ given by $\mathscr{R}_0 = \rho(FV^{-1})$, is obtained as
\begin{equation}\label{eq.effective}
\mathscr{R}_v = \frac{\varLambda \beta \gamma (\theta + \mu)}{\mu (\delta + \tau + \mu)(\theta + \pi + \mu)(\gamma + \mu)}
\end{equation}
The threshold quantity $\mathscr{R}_V$ given in \ref{eq.effective} is known as the effective reproduction number, also called the control reproduction number. It calculates the typical number of new secondary infections that one infected person can spread over the course of an infection in a population that is totally vulnerable even with control measures in place \cite{ojo2023mathematical}. In this way, the number $\mathscr{R}_v$ in \ref{eq.effective} shows how many new TB cases one infected person is likely to cause in a population that is completely susceptible and has a vaccination program. A similar threshold quantity identified as the basic reproduction number can be obtained by setting the control measure and its parameters to zero (i.e., $\theta = \pi = 0$), such that 
\begin{equation}\label{eq.basic}
\mathscr{R}_0 = \frac{\varLambda \beta \gamma}{\mu (\delta + \tau + \mu)(\gamma + \mu)}
\end{equation}
Notably, the basic reproduction number $\mathscr{R}_0$ measures the average number of secondary cases produced by a single infected individual in a population that is completely susceptible, in a population without any control measure (see, for example, \cite{sulayman2021sveire}, \cite{ojo2023mathematical}).

\subsection{Stability Analysis}\label{sec_five}
Stability analysis can predict the long-term behavior of the model solution. We employ two types of stability analysis: local and global stability. Local stability is concerned with the behavior of the solution near the equilibrium points, whereas global stability can describe the behavior of the solution in the whole domain.

\subsubsection{Local Stability of the Disease-Free Equilibrium Point}
The DFEP $(\mathscr{E}_0)$ is said to be locally asymptotically stable if the real parts of the eigenvalue are all negative and unstable if at least one of the real parts of the eigenvalue is positive. If $\mathscr{R}_v< 1$, then the disease-free equilibrium point is locally asymptotically stable, i.e., no tuberculosis epidemic can develop in the population. If $\mathscr{R}_v> 1$, then the disease-free equilibrium point is unstable.

\begin{theorem}
The disease-free equilibrium point for model equation \eqref{eq:4.1} is locally asymptotically stable if $\mathscr{R}_v< 1$ and unstable if $\mathscr{R}_v> 1$.
\end{theorem}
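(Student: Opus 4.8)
The plan is to linearize the full system \eqref{eq:4.1} about $\mathscr{E}_0$ and show that the spectrum of the Jacobian lies in the open left half-plane precisely when $\mathscr{R}_v < 1$. First I would compute the $6\times 6$ Jacobian $J(\mathscr{E}_0)$, evaluating all partial derivatives at the disease-free point where $I=0$ and $S = S^{*} = \frac{\Lambda(\theta+\mu)}{\mu(\theta+\pi+\mu)}$. The crucial observation is that setting $I=0$ annihilates the $\beta I$ coupling in the $S$- and $E$-equations, so the rows corresponding to $(E,I,T,R)$ carry no dependence on $(S,V)$. This renders $J(\mathscr{E}_0)$ block upper-triangular, with a $2\times 2$ block $A$ in the $(S,V)$ variables and a $4\times 4$ block $C$ in the $(E,I,T,R)$ variables. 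Hence the eigenvalues of $J(\mathscr{E}_0)$ are exactly those of $A$ together with those of $C$, and I never expand a genuine sextic characteristic polynomial.

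Next I would dispatch the ``easy'' eigenvalues. The block $A = \begin{bmatrix} -(\pi+\mu) & \theta \\ \pi & -(\theta+\mu) \end{bmatrix}$ has negative trace and determinant $\mu(\theta+\pi+\mu)>0$, so by the trace--determinant (Routh--Hurwitz) criterion both of its eigenvalues have negative real part, independently of $\mathscr{R}_v$. Inside $C$ the same structural trick recurs: the $(T,R)$ rows do not feed back into $(E,I)$, so $C$ is itself block lower-triangular, peeling off two manifestly negative diagonal eigenvalues $-(\alpha+\omega+\mu)$ and $-(\eta+\mu)$ and leaving a single $2\times 2$ infection block $C_{11} = \begin{bmatrix} -(\gamma+\mu) & \beta S^{*} \\ \gamma & -(\delta+\tau+\mu) \end{bmatrix}$ to decide stability.

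The heart of the argument is to read the threshold off $C_{11}$. Its trace is automatically negative, so the sign of its eigenvalues is governed entirely by $\det C_{11} = (\gamma+\mu)(\delta+\tau+\mu) - \beta\gamma S^{*}$. Substituting $S^{*}$ and comparing with \eqref{eq.effective}, I expect this to factor cleanly as $\det C_{11} = (\gamma+\mu)(\delta+\tau+\mu)\,(1-\mathscr{R}_v)$. When $\mathscr{R}_v<1$ this determinant is positive, so (negative trace, positive determinant) both eigenvalues of $C_{11}$ have negative real part and every eigenvalue of $J(\mathscr{E}_0)$ does too, yielding local asymptotic stability. When $\mathscr{R}_v>1$ the determinant is negative, forcing the two eigenvalues of $C_{11}$ to be real with opposite signs; the positive one renders $\mathscr{E}_0$ unstable.

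The main obstacle is algebraic rather than conceptual: carrying the substitution of $S^{*}$ through $\det C_{11}$ and recognizing the exact cancellation that produces the factor $(1-\mathscr{R}_v)$; everything else is structural bookkeeping about the triangular blocks. As an independent check one could instead invoke Theorem~2 of van den Driessche--Watmough directly from the $F$ and $V$ already computed, since the required nonnegativity and nonsingular $M$-matrix hypotheses plainly hold here, but the explicit Jacobian route above is more self-contained and delivers the instability direction transparently.
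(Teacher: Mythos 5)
Your proposal is correct and follows essentially the same route as the paper: both compute the Jacobian at $\mathscr{E}_0$, exploit its block/factorized structure so that the characteristic polynomial splits into the $(S,V)$ quadratic $(\pi+\mu+\lambda)(\theta+\mu+\lambda)-\theta\pi$, the two diagonal eigenvalues $-(\eta+\mu)$ and $-(\alpha+\omega+\mu)$, and the $(E,I)$ quadratic whose constant term carries the factor $(1-\mathscr{R}_v)$. If anything, your write-up is slightly more complete than the paper's, since you justify the negativity of the $(S,V)$ eigenvalues via trace and determinant (the paper merely calls them ``obviously'' negative) and you explicitly derive the instability for $\mathscr{R}_v>1$ from the sign change of $\det C_{11}$, which the paper asserts without argument.
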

\begin{proof}
The Jacobian matrix of the model equation \eqref{eq:4.1} at the disease free equilibrium point $\mathscr{E}_0$ is given by 
$$J(\mathscr{E}_0)=\begin{bmatrix} -(\pi + \mu) & \theta &  0 & \frac{-\varLambda \beta(\theta + \mu)}{\mu(\theta + \pi + \mu)} & 0 &  \eta\\ 
\pi &  -(\theta + \mu) &  0 &\quad  0 &  0 & 0\\ 
0  &  0 &  -(\gamma + \mu) &   \frac{\varLambda \beta(\theta + \mu)}{\mu(\theta + \pi + \mu)} &  0 & 0\\  
0 &  0 &  \gamma &  -(\delta + \tau + \mu) & 0 &  0\\ 
0 &  0 &  0 &  \delta &  -(\alpha + \omega + \mu) & 0\\
0 &  0 &  0 & 0 &  \alpha & -(\eta + \mu)\\ 
\end{bmatrix}.$$ 
	
The eigenvalues of $J(\mathscr{E}_0)$ can be obtained from $|J(\mathscr{E}_0) - \lambda I_6| = 0$, where $I_6$ is the identity matrix of order 6. Then we have
	
\begin{flalign}\label{eq.jacobian}
\begin{bmatrix} 
-(\pi\! +\! \mu\! +\! \lambda) & \theta &  0 & \frac{-\varLambda \beta(\theta\! +\! \mu\! +\! \lambda)}{\mu(\theta\! +\! \pi\! +\! \mu)} & 0 &  \eta\\
\pi &  -(\theta\! +\! \mu\! +\! \lambda) &  0 &\quad  0 &  0 & 0\\ 
0  &  0 &  -(\gamma\! +\! \mu\! +\! \lambda) &   \frac{\varLambda \beta(\theta\! +\! \mu)}{\mu(\theta\! +\! \pi\! +\! \mu)} &  0 & 0\\  
0 &  0 &  \gamma &  -(\delta\! +\! \tau\! +\! \mu \!+\! \lambda) & 0 &  0\\ 
0 &  0 &  0 &  \delta &  -(\alpha\! +\! \omega\! +\! \mu\! +\! \lambda) & 0\\
0 &  0 &  0 & 0 &  \alpha & -(\eta\! +\! \mu\! +\! \lambda)\\ 
\end{bmatrix}\!=\!0
\end{flalign}
	
From the matrix in Equation \eqref{eq.jacobian}, we obtain the characteristic equation given by
\begin{equation*}
\big[(\pi +\mu + \lambda)(\theta + \mu +\lambda) - \theta\pi\big]\big[(\eta+ \mu+ \lambda)(\alpha + \omega + \mu +\lambda)\big] \bigg[(\gamma +\mu + \lambda)(\delta + \tau + \mu + \lambda)-\displaystyle\frac{\varLambda \beta\gamma(\theta + \mu)}{\mu(\theta + \pi + \mu)}\bigg] = 0.
\end{equation*}
Obviously, $\lambda_1$, $\lambda_2$, $\lambda_3$, and $\lambda_4$ are negative, and for stability, all the real parts of the eigenvalues must be negative, which means 
\begin{equation*}
(\gamma +\mu)(\delta + \tau + \mu)-\displaystyle\frac{\varLambda \beta\gamma(\theta + \mu)}{\mu(\theta + \pi + \mu)} < 0 \Longrightarrow  \frac{\varLambda \beta\gamma(\theta + \mu)}{\mu(\theta + \pi + \mu)(\gamma +\mu)(\delta + \tau + \mu)} < 1.
\end{equation*}
Hence, $\displaystyle\mathscr{R}_v < 1$, and therefore, the disease-free equilibrium point $\mathscr{E}_0$ is locally asymptotically stable if $\mathscr{R}_v<1$.
\end{proof}

\subsubsection{Global Stability of the Disease-Free Equilibrium Point}
This section examines the global stability of the equilibrium points of the ordinary differential equation (ODE) in \eqref{eq:4.1}. We analyzed the global stability of the disease-free equilibrium point via the Castillo-Chavez approach \citep{castillo2002mathematical}. The Castillo-Chavez approach claims globally asymptotic stability of the disease-free equilibrium point if two specific conditions (specified below) are met when $\mathscr{R}_v < 1$. First, Equation \eqref{eq:4.1} can be restated in the following manner:
\begin{equation}\label{eq.global1}
\begin{cases}
\frac{dX}{dt} = F(X,Z) \\
\frac{dZ}{dt} = G(X,Z); \quad G(X,0) = 0
\end{cases}
\end{equation}
where $X =(S, V, R) \in \mathbb{R}^3_+$ represents the class of uninfected individuals and where $Z = (E, I, T)\in \mathbb{R}^3_+$ represents the class of infected individuals. Thus, $\mathscr{E}_0 = (x^*, 0)$ denotes the disease-free equilibrium of the system in equation \ref{eq.global1}. The specific conditions, I and II, are defined below and must be satisfied to guarantee local asymptotic stability:
\begin{itemize}
\item [\textbf{I}.] For $\frac{dX}{dt} = F(X, 0), \mathscr{E}_0$ is globally asymptotically stable.
\item [\textbf{II}.] $G(X, Z) = AZ - \hat{G}(X, Z),  \hat{G}(X, Z) \geq 0$ for all $(X , Z)$ $\in \mathbb{R} $, where $A = D_Z G(\mathscr{E}_0)$ is an M-matrix (the off-diagonal elements $A$ are nonnegative) and $\mathbb{R}$ is the region where the model makes biological sense. If System \ref{eq.global1} satisfies the above two conditions, then the following theorem holds \cite{castillo2002mathematical}:
\end{itemize}

\begin{theorem}[\textbf{Castillo-Chavez Theorem}]\label{CC}
The fixed point $\mathscr{E}_0 = (x^*, 0)$ is a globally asymptotic stable equilibrium of \ref{eq.global1}, provided that $\mathscr{R}_0<1$ under assumptions I and II are satisfied.
\end{theorem}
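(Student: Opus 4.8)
The plan is to establish the global asymptotic stability of $\mathscr{E}_0$ by verifying that the model, written in the split form \eqref{eq.global1} with uninfected variables $X = (S,V,R)$ and infected variables $Z = (E,I,T)$, satisfies the two Castillo--Chavez conditions I and II; the conclusion then follows directly from the cited result of \cite{castillo2002mathematical}. Throughout I would work inside the feasible region $\Omega$ identified in the boundedness theorem.

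First I would dispatch Condition I. Setting $Z = 0$ (that is, $E=I=T=0$) reduces the uninfected dynamics to $\tfrac{dS}{dt} = \Lambda + \eta R + \theta V - (\pi+\mu)S$, $\tfrac{dV}{dt} = \pi S - (\theta+\mu)V$, and $\tfrac{dR}{dt} = -(\eta+\mu)R$. The $R$-equation decouples and yields $R(t) = R(0)e^{-(\eta+\mu)t}\to 0$, so the limiting $(S,V)$-system is linear with coefficient matrix having trace $-(\pi+\theta+2\mu)<0$ and determinant $\mu(\theta+\pi+\mu)>0$; both eigenvalues therefore have negative real part and $(S,V)\to(S^*,V^*)$, the first two coordinates of $\mathscr{E}_0$. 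This makes $x^*$ globally asymptotically stable for $\tfrac{dX}{dt}=F(X,0)$, so Condition I holds.

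Next I would verify Condition II. Taking $G(X,Z)$ to be the right-hand side of the infected subsystem \eqref{eq:3}, the Jacobian $A=D_Z G(\mathscr{E}_0)$ coincides exactly with the next-generation combination $F-V$ computed in Section~\ref{sec_four}, namely
\[
A = \begin{bmatrix} -(\gamma+\mu) & \beta S^* & 0 \\ \gamma & -(\delta+\tau+\mu) & 0 \\ 0 & \delta & -(\alpha+\omega+\mu) \end{bmatrix}.
\]
Its off-diagonal entries $\beta S^*,\gamma,\delta$ are nonnegative, so $A$ is a Metzler (M-)matrix, and the threshold condition $\mathscr{R}_v=\rho(FV^{-1})<1$ is precisely what makes $A$ a stable matrix. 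Forming $\hat{G}(X,Z)=AZ-G(X,Z)$, the transition rows cancel identically and only the incidence term survives, leaving $\hat{G}(X,Z)=\big(\beta I(S^*-S),\,0,\,0\big)^{\top}$.

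The hard part is ensuring $\hat{G}(X,Z)\ge 0$ throughout $\Omega$, which reduces to the inequality $S\le S^*$ whenever $I>0$. Merely asserting this --- as many such analyses do --- is the step I would treat with care, since $S^*<\Lambda/\mu$ and the crude bound $S\le N\le \Lambda/\mu$ coming from boundedness does not by itself deliver $S\le S^*$, especially given the inflows $\eta R$ and $\theta V$ into the susceptible class. I would instead argue by comparison: relative to the $Z=0$ subsystem whose equilibrium is $S^*$, the full susceptible equation carries the extra loss term $-\beta I S\le 0$, so I would try to bound $S(t)$ above by the corresponding disease-free trajectory and conclude it cannot exceed $S^*$ while infection persists. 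Making this invariance estimate rigorous is the delicate point of the argument; once $\hat{G}\ge 0$ is secured, both conditions hold under $\mathscr{R}_v<1$, and the Castillo--Chavez theorem yields the global asymptotic stability of $\mathscr{E}_0$.
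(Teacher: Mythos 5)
Your overall strategy---verifying conditions I and II for the decomposition $X=(S,V,R)$, $Z=(E,I,T)$ and then invoking the cited result of \cite{castillo2002mathematical}---is exactly the route the paper takes, and your treatment of Condition I is actually more complete than the paper's (the paper simply asserts that $(S,V,R)\to(S_0,V_0,0)$, whereas you decouple $R$ and check the trace and determinant of the limiting linear $(S,V)$ subsystem). Your matrices $A$ and $\hat{G}$ agree with the paper's.

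The substantive issue is the step you yourself flag: establishing $\hat{G}\geq 0$, i.e.\ $S\leq S^*$ on the feasible region. You are right that the paper's justification (``$\Lambda/\mu$ bounds the total population, hence $S_0>S$'') is a non sequitur, since $S^*=\Lambda(\theta+\mu)/[\mu(\theta+\pi+\mu)]<\Lambda/\mu$ and boundedness only gives $S\leq N\leq\Lambda/\mu$. However, the comparison argument you propose will not close the gap either, because the set $\{S\leq S^*\}$ is not positively invariant for system \eqref{eq:4.1}. The region $\Omega$ contains initial data with $S(0)>S^*$ and $I(0)>0$, at which the first component $\beta I(S^*-S)$ of $\hat{G}$ is already negative; and even starting from $S\leq S^*$, at a boundary point $S=S^*$ one computes, using $\Lambda+\theta V^*=(\pi+\mu)S^*$, that $dS/dt=\eta R+\theta(V-V^*)-\beta I S^*$, which is positive whenever the inflows $\eta R$ and $\theta(V-V^*)$ dominate the small loss term $\beta I S^*$---precisely the inflows you were worried about. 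So Condition II, as literally stated, fails on $\Omega$, and no upper comparison with the disease-free trajectory can rescue a pointwise inequality that is false. To make the conclusion rigorous one would need either a genuinely invariant subregion on which $S\leq S^*$ holds, or a different tool altogether (e.g.\ a Lyapunov function for the infected subsystem $Z$, or a comparison-theorem argument bounding $Z$ by a linear system with matrix $A$). In short: same route as the paper, you correctly located the soft spot, but your proposed repair does not yet work---and, to be fair, neither does the paper's.
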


Now, an application of theorem \ref{CC} for System \eqref{eq:4.1} yields:
\begin{equation*}
X =(S, V, R)^T \in \mathbb{R}^3_+, Z = (E, I,  T)^T\in \mathbb{R}^3_+, \mathscr{E}_0 = (x^*, 0) = (S_0, V_0, 0).
\end{equation*}
As $t\rightarrow\infty$,\ $(S(t), V(t), R(t)) \rightarrow$ $(\frac{\Lambda(\theta + \mu)}{\mu(\theta + \pi + \mu)}$, $ \frac{\Lambda\pi}{\mu(\theta + \pi + \mu)}, 0) = \mathscr{E}_0 = (x^*, 0)$ is globally asymptotically stable. Moreover,

\begin{center}
$A= \begin{pmatrix}  
-(\gamma + \mu) &\frac{\Lambda\beta(\theta + \mu)}{\mu(\theta + \pi + \mu)} & 0 \\ 
\gamma & -(\delta + \tau + \mu) & 0 \\
0 & \delta & -(\alpha + \omega + \mu)
\end{pmatrix}\quad\&\quad \hat{G}(X, Z) = \begin{pmatrix}  
\beta I[S_0-S] \\ 
0 \\
0
\end{pmatrix}$	
\end{center}

For System \eqref{eq:4.1}, $S_0=\frac{\Lambda(\theta+\mu)}{\mu(\theta+\pi+\mu)}$, since $\frac{\Lambda}{\mu}$ represents the upper bound of the total population. Hence, $S_0 > S$, which leads to $\hat{G}(X, Z) \geq 0$ for all $(X,Z) \in \mathbb{R}^6_+$. Thus, the assumptions, \textbf{I} and \textbf{II}, of the Castillo-Chavez theorem are satisfied for System \eqref{eq:4.1}, hence, the theorem follows.

\subsection{Endemic Equilibrium Point $(\mathscr{E}^*)$, and Sensitivity Analysis}\label{sec_six}
The endemic equilibrium point (EEP) $\mathscr{E^*}=(S^*, V^*, E^*, I^*, T^*, R^*)$ is the point at which the disease remains alive in the population when at least one of the infected classes of the model is nonzero. To obtain the endemic equilibrium point, we equate all expressions of model equations \(\left( \ref{eq:4.1} \right)\) of the derivative to zero. From the system of equations \(\left( \ref{eq:4.1} \right)\), we get the following simplified endemic equilibrium point:
 \begin{align*}
S^* &= \frac{\varLambda(\theta + \mu)}{ \mu(\theta + \tau + \mu)\mathscr R_0},~ V^* = \frac{\varLambda\pi}{\mu(\theta + \pi + \mu)\mathscr R_0} \\
E^* &= \frac{\mu(\eta + \mu)(\alpha +\omega +\mu)(\gamma + \mu)(\theta + \pi +\mu)(\delta + \tau + \mu)^2(\mathscr{R}_0 -1)}{\beta\gamma(\theta + \mu)[(\delta + \tau + \mu)(\alpha + \omega + \mu)(\gamma + \mu)(\eta + \mu) - \eta\delta\gamma\alpha]},\\ 
I^* &= \frac{\mu(\eta + \mu)(\alpha +\omega +\mu)(\delta + \tau + \mu)(\gamma + \mu)(\theta + \pi +\mu)(\mathscr{R}_0 -1)}{\beta(\theta + \mu)[(\delta + \tau + \mu)(\alpha + \omega + \mu)(\gamma + \mu)(\eta + \mu) - \eta\delta\gamma\alpha]}, \\
T^* &= \frac{\delta\mu(\eta + \mu)(\delta + \tau + \mu)(\gamma + \mu)(\theta + \pi +\mu)(\mathscr{R}_0 -1)}{\beta(\theta + \mu)[(\delta + \tau + \mu)(\alpha + \omega + \mu)(\gamma + \mu)(\eta + \mu) - \eta\delta\gamma\alpha]},~ \text{and}\\ 
R^* &= \frac{\alpha\delta\mu(\delta + \tau + \mu)(\gamma + \mu)(\theta + \pi +\mu)(\mathscr{R}_0 -1)}{\beta(\theta + \mu)[(\delta + \tau + \mu)(\alpha + \omega + \mu)(\gamma + \mu)(\eta + \mu) - \eta\delta\gamma\alpha]}.
\end{align*}

\subsubsection{Local Stability of the Endemic Equilibrium Point}
\begin{theorem}
The endemic equilibrium point of the system of equations in  \eqref{eq:4.1} is locally asymptotically stable if $\mathscr{R}_0>1$ and unstable if $\mathscr{R}_0<1$.
\end{theorem}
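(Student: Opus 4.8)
The plan is to establish local asymptotic stability by linearizing the system \eqref{eq:4.1} about the endemic equilibrium $\mathscr{E}^*$ and analyzing the spectrum of the resulting $6\times 6$ Jacobian $J(\mathscr{E}^*)$. First I would write out the Jacobian, whose only nonconstant entries sit in the $(S,S)$, $(S,I)$, $(E,S)$ and $(E,I)$ positions (all coming from the nonlinear term $\beta S I$); evaluating at $\mathscr{E}^*$ these become $-(\beta I^* + \pi + \mu)$, $-\beta S^*$, $\beta I^*$ and $\beta S^*$. I would then simplify entries using the equilibrium identities obtained by setting the right-hand sides of \eqref{eq:4.1} to zero, namely $\beta S^* I^* = (\gamma+\mu)E^*$, $\gamma E^* = (\delta+\tau+\mu)I^*$, $\delta I^* = (\alpha+\omega+\mu)T^*$ and $\alpha T^* = (\eta+\mu)R^*$, so that the characteristic polynomial can be expressed compactly in terms of the compartmental transfer rates.

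Next I would form the characteristic equation $\det\big(J(\mathscr{E}^*) - \lambda I_6\big) = 0$, giving a degree-six polynomial
$$P(\lambda) = \lambda^6 + a_1\lambda^5 + a_2\lambda^4 + a_3\lambda^3 + a_4\lambda^2 + a_5\lambda + a_6.$$
I would attempt to exploit the near-triangular structure of the removal chain $I \to T \to R$ and the $S$--$V$ vaccination block to reduce the algebra; note, however, that the feedback $\eta R$ into the $S$-equation prevents a clean block-triangular split, so the full polynomial is likely unavoidable. I would then apply the Routh--Hurwitz criterion: verify that every coefficient $a_i$ is positive and that the required Hurwitz determinants are positive. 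The structural fact driving all these signs is that $\mathscr{E}^*$ has strictly positive components precisely when $\mathscr{R}_0 > 1$ (each of $E^*, I^*, T^*, R^*$ carries the factor $\mathscr{R}_0 - 1$), together with positivity of the bracketed denominator $(\delta+\tau+\mu)(\alpha+\omega+\mu)(\gamma+\mu)(\eta+\mu) - \eta\delta\gamma\alpha$ appearing in those expressions.

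For the instability claim when $\mathscr{R}_0 < 1$, I would track the constant term $a_6 = \det J(\mathscr{E}^*)$, which equals the product of the six eigenvalues and which I expect to carry a factor $(\mathscr{R}_0 - 1)$: it vanishes at the transcritical threshold $\mathscr{R}_0 = 1$, where $\mathscr{E}^*$ coalesces with the disease-free equilibrium. For $\mathscr{R}_0 < 1$ this renders $a_6 < 0$, which already violates the Routh--Hurwitz necessary condition; moreover, since the dimension is even, a negative product of eigenvalues forces an odd number of positive real eigenvalues, hence at least one, so $\mathscr{E}^*$ is unstable. This is the expected exchange of stability with the disease-free equilibrium across $\mathscr{R}_0 = 1$, and it is consistent with $\mathscr{E}^*$ being biologically infeasible (negative components) in that regime.

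The hard part will be the symbolic verification of the full set of Routh--Hurwitz inequalities for the degree-six polynomial: the higher Hurwitz determinants $\Delta_4$ and $\Delta_5$ are bulky, and controlling their sign requires repeatedly substituting the equilibrium relations and factoring out $(\mathscr{R}_0 - 1)$ rather than expanding blindly. If a clean Routh--Hurwitz argument proves intractable, I would fall back on the Castillo-Chavez--Song center-manifold bifurcation theorem to show that the endemic branch emerging at $\mathscr{R}_0 = 1$ is locally asymptotically stable (a forward bifurcation), which delivers the result at least in a neighborhood of the threshold.
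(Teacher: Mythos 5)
Your proposal follows essentially the same route as the paper: form the $6\times 6$ Jacobian at $\mathscr{E}^*$, expand the degree-six characteristic polynomial, and invoke the Routh--Hurwitz criterion, with the factor $(\mathscr{R}_0-1)$ in the equilibrium components controlling the sign structure. You are in fact more explicit than the paper, which lists the coefficients $a_0,\dots,a_6$ and then simply asserts that ``after substitution and simplification'' Routh--Hurwitz yields stability for $\mathscr{R}_0>1$, leaving unverified exactly the Hurwitz determinants you correctly flag as the hard part (and for which your center-manifold fallback is a reasonable safeguard).
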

\begin{proof}
The stability of the endemic equilibrium point is then determined on the bases of the sign of the eigenvalues of the Jacobian matrix obtained as: 
$$J(\mathscr{E^*})=\begin{bmatrix}
-( \beta I^* +\pi + \mu) & \theta & 0 & -\beta S^*& 0& \eta\\ 
\pi & -(\theta + \mu) & 0 & 0 & 0&0\\ 
\beta I^* & 0 & -(\gamma + \mu) &  \beta S^* & 0&0\\  
0 & 0 & \gamma & -(\delta + \tau + \mu) & 0&0\\ 
0 & 0 & 0 & \delta & -(\alpha + \omega + \mu)&0\\
0 & 0 & 0 & 0 &\alpha&-(\eta + \mu)
\end{bmatrix}.$$
To find the eigenvalues, we have
\begin{equation}\label{eq.locstable1}
|J (\mathscr{E^*}) - \lambda I_6|  = 0
\end{equation}
Let	$a = (\beta I^* + \pi + \mu), b = \beta I^*,  c = (\theta + \mu),  d = (\gamma + \mu), e = \beta S^*, f = (\delta + \tau + \mu), g = (\eta +\mu)$ and $h = (\alpha + \omega + \mu)$. Then the characteristic polynomial representation of equation \eqref{eq.locstable1} is given by:
\begin{equation*}
a_6 \lambda^6 + a_5\lambda^5 + a_4\lambda^4 + a_3\lambda^3 + a_2\lambda^2 + a_1\lambda^1 + a_0\lambda^0 = 0
\end{equation*}
 where
\begin{align*}
a_0 &= acdfgh + egh\pi\theta\gamma - acegh\gamma - bcegh\gamma - dfgh\pi\theta - bc\alpha\eta\gamma\delta,\\
a_1 &= acfgh + acdfg + acdfh + adfgh +cdfgh + acdgh + eg\pi\theta\gamma + eh\pi\theta\gamma - aceg\gamma - aceh\gamma- aegh\gamma - cegh\gamma -\\ 
& dfg\pi\theta - dfh\pi\theta - dgh\pi\theta - bceg\gamma - bceh\gamma - begh\gamma - fh\pi\theta - b\alpha\gamma\eta\delta, \\
a_2 &=acfg + acfh + afgh + cfgh + acgh + acdf + adfg + adfh + cdfg + cdfh + acdg + acdh + adgh + cdgh + \\ 
& e\pi\theta\gamma - ace\gamma - aeg\gamma - aeh\gamma - ceg\gamma - ceh\gamma- egh\gamma- bce\gamma -beg\gamma - beh\gamma - df\pi\theta - dg\pi\theta - dh\pi\theta - \\ 
& fg\pi\theta - fh\pi\theta - gh\pi\theta, \\
a_3 &= acf + afg + afh + cfg  + cfh + fgh + acg + ach + agh + cgh + adf + dfh + acd + adg + adh + cdg + cdh + \\ 
& dgh - ae\gamma - ce\gamma -be\gamma - eg\gamma- eh\gamma - eg\gamma - d\pi \theta - f\pi\theta - g\pi\theta - h\pi\theta, \\
a_4 &= af+ cf + fg + fh + ac + ag + ah +cg +ch +gh + df + ad + cd + dg +dh - e\gamma - \pi\theta,\\
a_5 &= (a + c + d + f + g +h),~ \text{and}~ a_6 = 1.
\end{align*}
After substitution and simplification by the Routh-Hurwitz stability criterion \citep{rcet}, the endemic equilibrium point $\mathscr{E}^*$ is locally asymptotically stable if $\mathscr{R}_0>1$.
\end{proof}

\subsubsection{Global Stability of Endemic Equilibrium Point}
\begin{theorem}
If $\mathscr{R}_0>1$, the endemic equilibrium $\mathscr{E}^*$ of the system of equations in  \eqref{eq:4.1} is globally asymptotically stable, whereas if $\mathscr{R}_0<1$, it is unstable.
\end{theorem}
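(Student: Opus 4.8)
The plan is to prove global asymptotic stability by the direct Lyapunov method combined with LaSalle's invariance principle, working throughout inside the positively invariant, bounded region $\Omega = \{(S,V,E,I,T,R)\in\mathbb{R}^6_+ : 0\le N \le \Lambda/\mu\}$ established in the boundedness theorem. Because the infected compartments form the cascade $E\to I\to T\to R$ that feeds back into $S$ through the terms $\eta R$ and $\theta V$, the natural candidate is a Goh--Volterra (logarithmic) Lyapunov function. Writing $(x_1,\dots,x_6)=(S,V,E,I,T,R)$ and $(x_1^*,\dots,x_6^*)=(S^*,\dots,R^*)$, I would take
\[
L = \sum_{i=1}^{6} c_i\left(x_i - x_i^* - x_i^*\ln\frac{x_i}{x_i^*}\right),
\]
with positive constants $c_1,\dots,c_6$ to be fixed during the computation. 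Each summand is nonnegative on the positive orthant and vanishes only when $x_i=x_i^*$, so $L>0$ on $\Omega\setminus\{\mathscr{E}^*\}$ while $L(\mathscr{E}^*)=0$; this guarantees $L$ is a genuine Lyapunov candidate once its derivative is shown to be nonpositive.

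First I would differentiate along trajectories to obtain $\dot{L}=\sum_{i=1}^{6} c_i\left(1-\tfrac{x_i^*}{x_i}\right)\dot{x}_i$ and substitute the right-hand sides of \eqref{eq:4.1}. The key simplification uses the six equilibrium identities, in particular $(\gamma+\mu)=\beta S^*I^*/E^*$, $(\delta+\tau+\mu)=\gamma E^*/I^*$, $(\alpha+\omega+\mu)=\delta I^*/T^*$, $(\eta+\mu)=\alpha T^*/R^*$, $(\theta+\mu)=\pi S^*/V^*$, and the susceptible balance $\Lambda = (\beta I^*+\pi+\mu)S^* - \eta R^* - \theta V^*$, to eliminate the constant recruitment term and every bare rate parameter. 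After this substitution $\dot{L}$ reduces to a sum of dimensionless expressions in the ratios $S/S^*,\, E/E^*,\, I/I^*$, etc. The strategy is then to group these into blocks of the standard form $n-\sum(\text{ratios})$ whose constituent ratios multiply to one, so that the arithmetic--geometric mean inequality forces each block to be $\le 0$; the simplest such blocks have the shape $2-\tfrac{a}{b}-\tfrac{b}{a}\le 0$.

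The main obstacle, and the step I expect to be genuinely delicate, is fixing the coefficients $c_i$ so that all residual cross terms cancel. The pure cascade $E\to I\to T\to R$ telescopes cleanly, but the two feedback loops---$R\to S$ via $\eta R$ and the $S\leftrightarrow V$ exchange via $\pi$ and $\theta$---break the telescoping and generate extra ratio terms in the $S$-equation, which carries three inflows ($\Lambda$, $\eta R$, $\theta V$). Verifying that a consistent choice of positive $c_i$ exists is exactly the kind of question handled systematically by the graph-theoretic (matrix--tree) construction of Li and Shuai, where the $c_i$ are read off as weights of spanning trees of the flow digraph; if a direct AM--GM grouping proves intractable, I would fall back on that method, or alternatively add a small quadratic correction term to $L$ to absorb the loop contributions. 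Establishing $\hat{G}$-type nonnegativity for these loop terms inside $\Omega$ (using $S\le \Lambda/\mu$) is where most of the real work lies.

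Finally, once $\dot{L}\le 0$ throughout $\Omega$ with equality precisely on the set where every ratio equals one, LaSalle's invariance principle identifies the largest invariant subset of $\{\dot{L}=0\}$ as the singleton $\{\mathscr{E}^*\}$, yielding global asymptotic stability of $\mathscr{E}^*$ in the interior of $\Omega$. This holds exactly when $\mathscr{R}_0>1$, which is the regime in which the equilibrium coordinates $E^*,I^*,T^*,R^*$ are positive and $\mathscr{E}^*$ lies in $\Omega$; for $\mathscr{R}_0<1$ those coordinates become negative, so $\mathscr{E}^*$ exits the feasible region and the endemic state is no longer an attractor, consistent with the earlier global stability of the disease-free equilibrium $\mathscr{E}_0$.
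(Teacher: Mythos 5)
Your overall strategy---a Goh--Volterra logarithmic Lyapunov function on the invariant region $\Omega$, the equilibrium identities to eliminate the rate constants, AM--GM groupings to force $\dot L\le 0$, and LaSalle's invariance principle to finish---is the same route the paper takes (the paper uses the unweighted version, all $c_i=1$). The problem is that you have described a plan rather than executed it: the entire content of the theorem lives in the step you defer, namely exhibiting positive weights $c_i$ for which the cross terms generated by the $\eta R\to S$ feedback and the $S\leftrightarrow V$ exchange actually collapse into blocks of the form $n-\sum(\text{ratios})\le 0$. Until those $c_i$ are produced and the groupings verified, you have not established $\dot L\le 0$ anywhere in $\Omega$, so nothing is proved. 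Your instinct that this is ``where most of the real work lies'' is exactly right, and the Li--Shuai matrix-tree construction you name is the standard tool; but naming the tool is not applying it, and the two loops you identify are precisely the ones that can prevent a pure Volterra function from working for this model without additional correction terms.

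For what it is worth, the paper's own proof has the same hole, stated less carefully: it takes $c_i=1$, expands $\dot{\mathcal V}$, splits the result into a positive part $X$ and a negative part $Y$, and concludes only that $\mathscr{E}^*$ is globally asymptotically stable ``if $X<Y$''---a condition it never verifies and never connects to the hypothesis $\mathscr{R}_0>1$. So your outline is, if anything, a more honest account of what remains to be done, and your closing observation about the $\mathscr{R}_0<1$ case (the endemic coordinates become negative, so $\mathscr{E}^*$ leaves the feasible region) addresses the ``unstable'' clause, which the paper's proof ignores entirely. But as submitted, the proposal does not constitute a proof of the theorem.
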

\begin{proof}
To prove the global asymptotic stability at the endemic equilibrium point, we use the Lyapunov function $\mathcal{V}$ \cite{shuai2013global}, which is defined and derived as follows:: 
\begin{align*}
\mathcal{V}(S^*, V^*, E^*, I^*, T^*, R^*) &= \bigg(S - S^* - S^*\displaystyle\ln \frac{S^*}{S}\bigg) + \bigg(V - V^* - V^*\displaystyle\ln \frac{V^*}{V}\bigg) + \bigg(E - E^* - E^*\displaystyle\ln \frac{E^*}{E}\bigg) +\\  
& \bigg(I - I^* - I^*\displaystyle\ln \frac{I^*}{I}\bigg) + \bigg(T - T^* - T^*\displaystyle\ln \frac{T^*}{T}\bigg) + \bigg(R - R^* - R^*\displaystyle\ln \frac{R^*}{R}\bigg).
\end{align*}
By calculating the derivatives $\mathcal{V}(S^*, V^*, E^*, I^*, T^*, R^*)$ with respect to $t$, we obtain
\begin{align*}
\displaystyle\frac{d\mathcal{V}}{dt} &= \bigg(\displaystyle\frac{S-S^*}{S}\bigg)\displaystyle\frac{dS}{dt} +  \bigg(\displaystyle\frac{V - V^*}{V}\bigg)\displaystyle\frac{dV}{dt} +  \bigg(\displaystyle\frac{E - E^*}{E}\bigg)\frac{dE}{dt} + \\
& \bigg(\displaystyle\frac{I - I^*}{I}\bigg)\displaystyle\frac{dI}{dt} +  \bigg(\displaystyle\frac{T - T^*}{T}\bigg)\displaystyle\frac{dT}{dt} + \bigg(\displaystyle\frac{R - R^*}{R}\bigg)\displaystyle\frac{dR}{dt}	
\end{align*}
By direct substitution of the model equation \ref{eq:4.1}, we have
\begin{align*}
\displaystyle\frac{d\mathcal{V}}{dt} &= \bigg(\displaystyle\frac{S-S^*)}{S}\bigg)(\Lambda + \eta R + \theta V- (\beta I + \pi + \mu)S) + \bigg(\displaystyle\frac{V- V^*}{V}\bigg)(\pi S - (\theta + \mu)V) +\\
& \bigg(\displaystyle\frac{E-E^*}{E}\bigg)(\beta SI - (\gamma + \mu)E) +\bigg(\displaystyle\frac{I-I^*}{I}\bigg)(\gamma E -(\delta +\tau + \mu)I) +\\
& \bigg(\displaystyle\frac{T-T^*}{T}\bigg)(\delta I -(\alpha + \omega +\mu)T) +\bigg(\displaystyle\frac{R-R^*}{R}\bigg)(\alpha T - (\eta + \mu)R).
\end{align*}
Then by simplifying the above equation, we have
\begin{align*}
\displaystyle\frac{d\mathcal{V}}{dt} &= \Lambda + \beta IS^* + \pi S^* + \mu S^* + \theta V^* + \mu V^* + \gamma E^* + \mu E^* +\delta I^* +\tau I^* + \mu I^* + \alpha T^* + \\
& \omega T^* + \mu T^* + \eta R^* + \mu R^* - \mu S - \Lambda\displaystyle\frac{S^*}{S} - \eta R\frac{S^*}{S} - \theta V\frac{S^*}{S} - \mu V -\pi S\frac{V^*}{V}  - \mu E - \\
& \beta SI \displaystyle\frac{E^*}{E}- \tau I - \mu I - \gamma E \displaystyle\frac{I^*}{I} -\omega T -\mu T - \delta I \frac{T^*}{T}- \mu R - \delta I\frac{R^*}{R}.
\end{align*}
Now, let us take $X$ as the positive term and $Y$ as the negative term of the equation above such that
$$\displaystyle\frac{dV}{dt} = X - Y.$$
Thus, if $X < Y$, then $\displaystyle\frac{d\mathcal{V}}{dt}< 0$; note that $S = S^*, V = V^*, E = E^*, I = I^*, T = T^*, R = R^*$ if and only if  $\frac{d\mathcal{V}}{dt}= 0$. Therefore, Lasalle’s invariant principle \cite{lasalle1960some} indicates that the endemic equilibrium $\mathscr{E^*}$ is globally asymptotically stable if $X < Y$.
\end{proof}

\subsection{Sensitivity Analysis}
We aim to ascertain the sensitivity of each parameter to the effective reproduction number of the model. We apply the following sensitivity index formula:
\begin{equation}\label{eq.SA_1}
\displaystyle\varUpsilon^{\mathscr{R}_v}_{mi} = \displaystyle\frac{\partial \mathscr{R}_v }{\partial mi} \times \displaystyle\frac{mi}{\mathscr{R}_0}	
\end{equation}
where $mi$ are the system’s parameters in the quantity $\mathscr{R}_v$. We recall that the effective reproduction number $\mathscr{R}_v$ is given by:
\begin{equation}\label{basic_repro}
\mathscr{R}_v = \displaystyle\frac{\varLambda \beta \gamma(\theta + \mu)}{\mu(\theta + \pi + \mu)(\delta + \tau + \mu)(\gamma + \mu)}.
\end{equation}
Then, via Equation \eqref{basic_repro}, direct computation yields

\begin{align*}
\displaystyle \varUpsilon^{\mathscr{R}_v}_{\varLambda} &= 1 > 0,~\displaystyle \varUpsilon^{\mathscr{R}_v}_{\beta}  = 1 > 0,~ \displaystyle\varUpsilon^{\mathscr{R}_v}_{\gamma} = \displaystyle\frac{\mu}{(\gamma + \mu)}  > 0,~
\displaystyle\varUpsilon^{\mathscr{R}_v}_{\theta} = \displaystyle\frac{\pi}{(\theta + \pi + \mu)(\theta + \mu)}  > 0,\\
\displaystyle \varUpsilon^{\mathscr{R}_v}_{\mu} &= -[\displaystyle\frac{\theta}{(\theta + \mu)} + \frac{\mu}{(\theta + \pi + \mu)} + \frac{\mu}{(\delta + \tau + \mu)} + \frac{\mu}{(\gamma + \mu)}] < 0,\\
\displaystyle\varUpsilon^{\mathscr{R}_v}_{\pi} &=  \displaystyle\frac{-\pi}{(\theta + \pi + \mu)} < 0,~
\displaystyle\varUpsilon^{\mathscr{R}_v}_{\delta} = \displaystyle\frac{-\delta}{(\delta + \tau + \mu)} < 0,~\text{and}~
\displaystyle\varUpsilon^{\mathscr{R}_0}_{\tau} = \displaystyle\frac{-\tau}{(\delta + \tau + \mu)} < 0.
\end{align*}

We use sensitivity analysis to explain how our reproduction number should be interpreted. Parameters with negative sensitivity indices $(\mu, \pi, \delta, \tau )$ have the effect of lowering the community's TB burden if their values increases (i.e., the disease's reproduction number decreases as its parameter values increase). Additionally, if the values of those parameters with positive sensitivity indices $(\Lambda, \beta, \gamma, \theta)$ increase, they play a large part in the spread of tuberculosis in the community. This means that if the values of those parameters increase, so does secondary infection in the community. Therefore, it is essential to increase the negative indices and decrease the positive indices to reduce the disease incidence in the community. Since increasing the human mortality rate to contain disease epidemics is unethical, we do not take this into consideration when studying sensitivity. 

However, many of these studies have either relied on data from countries with more robust health infrastructures or have focused on theoretical applications rather than incorporating region-specific epidemiological data. 

\section{Parameter Estimation}\label{sec_seven}
The SVEITRS model is an extension of the SEIR model, that is specifically tailored to tuberculosis dynamics, and it has several compartments that represent different stages of the disease and different types of interventions. To accurately reflect and capture the transitions of individuals between different disease states, incorporating region-specific epidemiological data is crucial during model parameter estimation. A significant gap in the current literature is the use of locally relevant temporal data to estimate model parameters accurately, with particular emphasis on the choice of estimation techniques. Hence, to accomplish these goals, we use a nonlinear least squares method by implementing minimum search algorithms (gradient descent) and maximum likelihood estimation (MLE) to estimate parameters. We employ these methods to estimate relevant parameters for our model on a synthetic example case and assess the performance and validity of our model via real tuberculosis data from Ethiopia from 2011-2021.

The systems of ODEs in Equation \ref{eq:4.1} can be expressed in more general form as:
\begin{equation}\label{par.eq}
\Dot{\mathbf{x}}(t) = f(\mathbf{x}(t),\theta), ~~ \mathbf{x}(t_0) = \mathbf{x}_0
\end{equation}
where $\mathbf{x}(t) = \big[S(t), V(t), E(t), I(t), T(t), R(t)\big]^t$ and $\mathbf{x}_0 = \mathbf{x}(0)$ are 6-dimensional state vectors and constant initial states, respectively; $\boldsymbol{\theta} = \big[\varLambda, \beta, \gamma, \nu, \pi, \eta, \theta, \delta, \sigma, \tau, \alpha, \omega\big]^t$ is a 12-dimensional vector of the model parameters, and $f : \Omega\subset \mathbb{R}^6\times\mathbb{R}^9\rightarrow\mathbb{R}^6$ is a continuous map, which is described on the open set
\begin{equation}\label{par.eq1}
\Omega = \big\{ (\mathbf{x}(t),\boldsymbol{\theta})\in\mathbb{R}^6\times\mathbb{R}^9|\mathbf{x}_n(t),~\boldsymbol{\theta}_m>0~ \text{for}~ n = 1,\cdots,6~ \text{and}~ m=1,\cdots, 12\big\}.
\end{equation}

The above notation for the SVEITRS models has been used in to illustrate the basic setup of the parameter estimation problem. If we have given the initial conditions $\mathbf{x}_0$ and a set of parameters $\boldsymbol{\theta}$, we can compute the state vectors $\mathbf{x}(t)$ of the model given in Equation \ref{par.eq}. In other words, if we have been given a set of observed data $\mathbf{y}(t), t = 1,\cdots, T$, we try to estimate parameters $\boldsymbol{\theta}$, and then the state vector $\mathbf{x}(t)$ for all $t$ can make predictions for $\mathbf{y}(t)$ for $t > T$. In the following subsection, we describe the two parameter estimation approaches, nonlinear least squares and maximum likelihood estimation, that we use in this study.

\subsection{Nonlinear least squares}
The nonlinear least squares method (NLS) is used to numerically approximate a solution of the model Equation \ref{par.eq1} and observations within the least-squares fitting, where we look for the $\hat{\boldsymbol{\theta}}$ value of the model parameter $\boldsymbol{\theta}$, which minimizes the least squares criterion
\begin{equation}\label{par.eq2}
J(\boldsymbol{\theta}) = \sum_{t=1}^{T} ||\mathbf{y}(t) - f(\mathbf{x}(t),\boldsymbol{\theta})||^2,
\end{equation}
where $f(\cdot)$ is the output of the model for a given set of parameters and where $\boldsymbol{\theta}$ and $\mathbf{y(t)}$ are the observable outputs. The function $J(\boldsymbol{\theta})$ measures the difference between observed data and model predictions, typically using the sum of squared errors. This problem can be classified as a nonlinear least-squares problem because the relationship between the solution system and the parameter $\boldsymbol{\theta}$ is based on a complex system of nonlinear differential equations.

\subsection{Maximum likelihood estimation}
Maximum likelihood estimation (MLE) is a statistical method used to estimate the parameters of a model by maximizing the likelihood function. This technique finds the parameter values that make the observed data most likely, offering a robust approach for parameter estimation in complex models. It is particularly useful in cases where the data are subject to significant variability. The likelihood function is
\begin{equation}\label{par.eq3}
L(\boldsymbol{\theta}) = p(\mathbf{y}|\boldsymbol{\theta})
\end{equation}

For a given set of observations $\mathbf{y}$, the likelihood function is defined as the joint probability of the observed data, given the parameters. This means that we assume that there is some uncertainty in the observed data $\mathbf{y}(t)$ and that the only thing we know about those errors is that they are centered (zero mean) and identically and independently distributed with constant variance; then the likelihood function for the SVEITRS model can be expressed as
\begin{equation}\label{par.eq4}
p(\mathbf{y}|\boldsymbol{\theta}) \propto \exp\left(-\frac{1}{2\sigma_y^2}J(\boldsymbol{\theta})\right) 
\end{equation}
where $\sigma^2_y$ is a fixed variance that can be estimated together with the parameters. The log-likelihood function, $\log L(\boldsymbol{\theta})$, is often used instead of the likelihood function because it simplifies the computations. Maximizing the log-likelihood function is equivalent to minimizing the negative log-likelihood function, and then the optimization problem for MLE can be formulated as:
\begin{equation}\label{par.eq5}
\hat{\boldsymbol{\theta}} = \arg\max_{\boldsymbol{\theta}}\left( p(\mathbf{y}|\boldsymbol{\theta})\right) = \arg\min_{\boldsymbol{\theta}}\left( -\ln{p(\mathbf{y}|\boldsymbol{\theta})}\right)
\end{equation}

\subsection{Synthetic Example Case}
To illustrate the parameter estimation process and check the model's predictive performance and model validity, let us consider a synthetic example case where we generated from the model using specific parameter values and initial conditions. $\beta = 0.05, \gamma = 0.48, \varLambda = 15, \eta = 0.2, \theta = 0.5, \pi = 0.1, \mu = 0.0157, \delta = 0.67, \tau = 0.275, \alpha = 0.88, \omega = 0.12, S(0) = 861, V(0) = 100, E(0) = 20, I(0) = 10, T(0) = 6$, and $R(0) = 3$ in a time range of $t(0) = 0$ to $t(n) = 50$. Then, we estimate the contact rate $(\beta)$ and transmission rate $(\gamma)$ from the generated data, given the other parameters and the generated data, via the minimum search algorithm and maximum likelihood estimation techniques.

The estimated parameter values for $\beta$, $\gamma$, and $\sigma^2_y$ are given in Table \ref{tab:my_label}. Through the parameter estimation techniques, we were able to return to the initial parameter values that we used to make the synthetic data, as shown in the table. The error for both parameters was very small. This shows the performance of the parameter estimation techniques. A comparison of the performance of the minimum search algorithm with that of MLE, reveals that the MLE is more powerful for the current model. Furthermore, we also estimate the 97.5\% confidence intervals of $\beta$ to be $(0.04998402, 0.04999678)$, $\gamma$ to be $(0.47997842, 0.48001454)$, and $\sigma_y^2$ to be $(0.01790117, 0.01790117)$ when MLE is used.
\begin{table}[H]
\centering
\begin{adjustbox}{width=\textwidth}
\begin{tabular}{|l|l|l|l|l|l|}
\hline
Parameters & True & Minimum & Error magnitude & MLE & Error magnitude\\
 & values & search (MS) & (MS)  &  & (MLE) \\
\hline
$\beta$ & 0.05 & 0.04789474 & 0.00210526 & 0.04999040 & 0.0000096\\
\hline
$\gamma$ & 0.48 & 0.4789474 & 0.0010526 & 0.47999648 & 0.00000352\\
\hline
$\sigma_y^2$ &  &  &  & 0.01790117 & \\
\hline
\end{tabular}
\end{adjustbox}
\caption{Estimated parameters of the contact rate $\beta$, infection rate $\gamma$, magnitudes of errors, and fixed common standard deviation ($\sigma_y^2$) for the synthetic data.}
\label{tab:my_label}
\end{table} 


\begin{figure}[H]
\centering
\subcaptionbox{Synthetic data.}
{\includegraphics[width=0.45\textwidth]{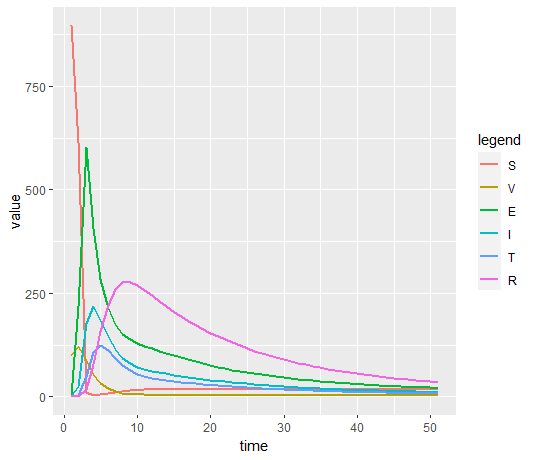}}
\subcaptionbox{Model output}
{\includegraphics[width=0.45\textwidth]{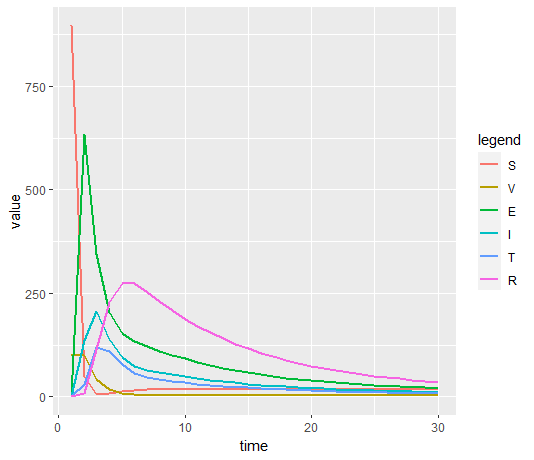}}  
\caption{A synthetic SVEITRS model simulated solution. The result is obtained with the following parameters: $\beta = 0.05, \gamma = 0.48, \varLambda = 15, \eta = 0.2, \theta = 0.5, \pi = 0.1, \mu = 0.0157, \delta = 0.67, \tau = 0.275, \alpha = 0.88, \omega = 0.12$, and initial values: $N = 1000, S(0) = 861, V(0) = 100, E(0) = 20, I(0) = 10, T(0) = 6, R(0) = 3$, years = 50.
}
\label{fig:enter-label10}
\end{figure}
We also checked how well the proposed model worked with the estimated parameters by comparing the fake data for each compartment with the model's results via the estimated parameters, as shown in the subfigures of Figure \ref{fig:enter-label10}. The results in the figures for all the compartments indicate the reliability and performance of the proposed model with the corresponding parameter estimation. For the real dataset, we replace the true generated synthetic data with actual observed data of Ethiopian tuberculosis cases from 2011-2021 and implement parameter estimation for the real data case. We present the parameter values obtained and initial conditions in Table \ref{tab:6.2} for the real Ethiopian tuberculosis data case from 2011-2021.

\begin{table}[H]
\centering
\begin{adjustbox}{width=\textwidth}
\begin{tabular}{clll}
\hline
\textbf{Parameter} & \textbf{Description} & \textbf{Value} & \textbf{Source}  \\
\hline
$N_0$ & Initial number of total population &$9.18\times10^7$ & \cite{worldometers2024}  \\
\hline
$S_0$ & Initial number of susceptible individuals & $7.505\times 10^7$ & Estimated \\
\hline
$V_0$ & Initial number of vaccinated individuals & $1.205\times10^6$ & Estimated\\
\hline
$E_0$ & Initial number of exposed individuals & $1.503\times10^7$& Estimated \\
\hline
$I_0$ & Initial number of infectious individuals & $2.323\times10^5$ & \cite{worldbank_tb_ethiopia} \\
\hline
$T_0$ & Initial number of treated individuals & $1.56\times10^5$ & \cite{tbdiah} \\
\hline
$R_0$ & Initial number of recovery individuals & $1.39\times10^5$ & \cite{worldbank_tuberculosis}\\
\hline
$\Lambda$ & Recruitment rate & $1.658\times10^6$ & Estimated \\
\hline
$\beta$ & The contact rate & $7.75\times 10^{-7}$ & Estimated \\
\hline
$\gamma$ & Infection rate & $0.225 $ & Estimated \\
\hline
$\mu$ & Natural death rate & $0.0157$ & Estimated \\
\hline
$\pi$ & Vaccination rate & $0.713$ & \cite{tradingeconomics2023} \\
\hline
$\eta$ & loss rate of immunity by recovered one & $0.2$ & Assumed \\
\hline
$\theta$ & Loss of protection for vaccination & $0.5$ & \cite{colditz1994efficacy} \\
\hline
$\delta$ & treatment rate of I(t) & $0.67$ & \cite{tbdiah}  \\
\hline
$\tau$ & disease-causing death rate of I(t) & $0.266$ & \cite{tradingeconomics} \\
\hline
$\alpha$ & Recovery rate for treated individuals & $0.89$ & \cite{worldbank_tuberculosis} \\
\hline
$\omega$ & disease-causing death rate of T(t) & $0.11$ & \cite{worldbank_tuberculosis} \\ 
\hline
\end{tabular}
\end{adjustbox}
\caption{Estimated values of parameters for the model}
\label{tab:6.2}
\end{table}

\section{Results and Discussions}\label{sec_eight}
In this section, we discuss the model outputs obtained and the corresponding parameter estimations for real Ethiopian tuberculosis data from 2011-2021. The numerical simulation results are obtained for the dynamical system of the tuberculosis disease model equation \(\left( \ref{eq:4.1} \right)\), and corresponding computations and graphical illustrations are carried out via Matlab and R programming languages. The idea behind numerical simulations is to look into how the proposed mathematical model works and why it makes sense. This model is too complicated to solve analytically, similar to most nonlinear systems. The simulations are performed by using the set of parameter values given in Table \ref{tab:6.2} above.
   
The main goal of this study is to find the contact rate ($\beta$) and transmission rate ($\gamma$) for tuberculosis (TB) by existing TB incidence data, along with other computed parameter values, and to investigate disease changes over time and how well the proposed model works. The comparison of the model output and the incidence data for the estimated parameter values is given in Figure \ref{fig:enter-label11} with a 95\% confidence level. The figure shows that the suggested SVEITRS model can explain how the disease changes over time. The estimated parameter values are very good, so the model's output closely matches the real data on disease incidence.  
\begin{figure}[H]
\centering
\includegraphics[width=1\linewidth]{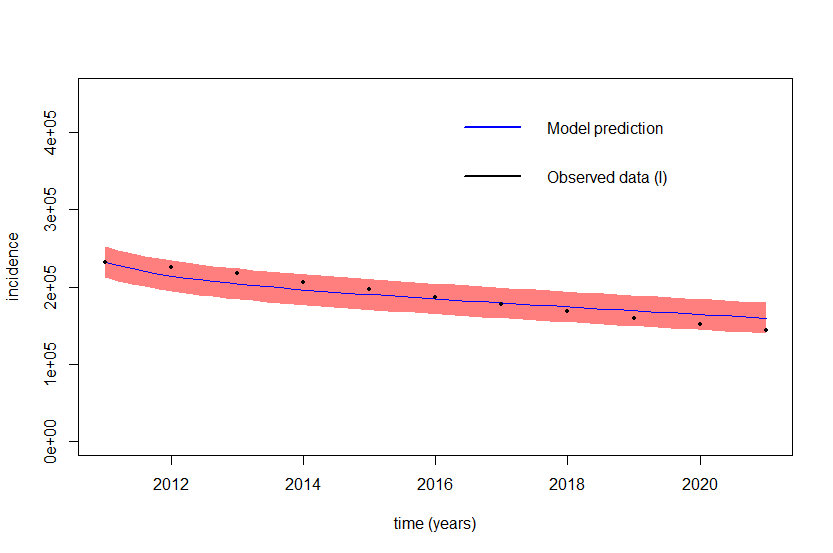}
\caption{The fitted model of the incidence cases to the reported cases is used for the model for Ethiopia from 2011-2021.}
\label{fig:enter-label11}
\end{figure}
In Figure \(\left( \ref{fig:SVEITRS_model} \right)\), we present the model outputs of all the compartments to illustrate how the different classes evolve over time under the estimated model parameters with initial conditions as the data in 2011. According to Figure \(\left( \ref{fig:SVEITRS_model} \right)\), as time increases, the susceptible human populations decrease rapidly, whereas the vaccinated population start to increase initially and then decrease rapidly.
\begin{figure}[H]
\centering
\includegraphics[width=1\linewidth]{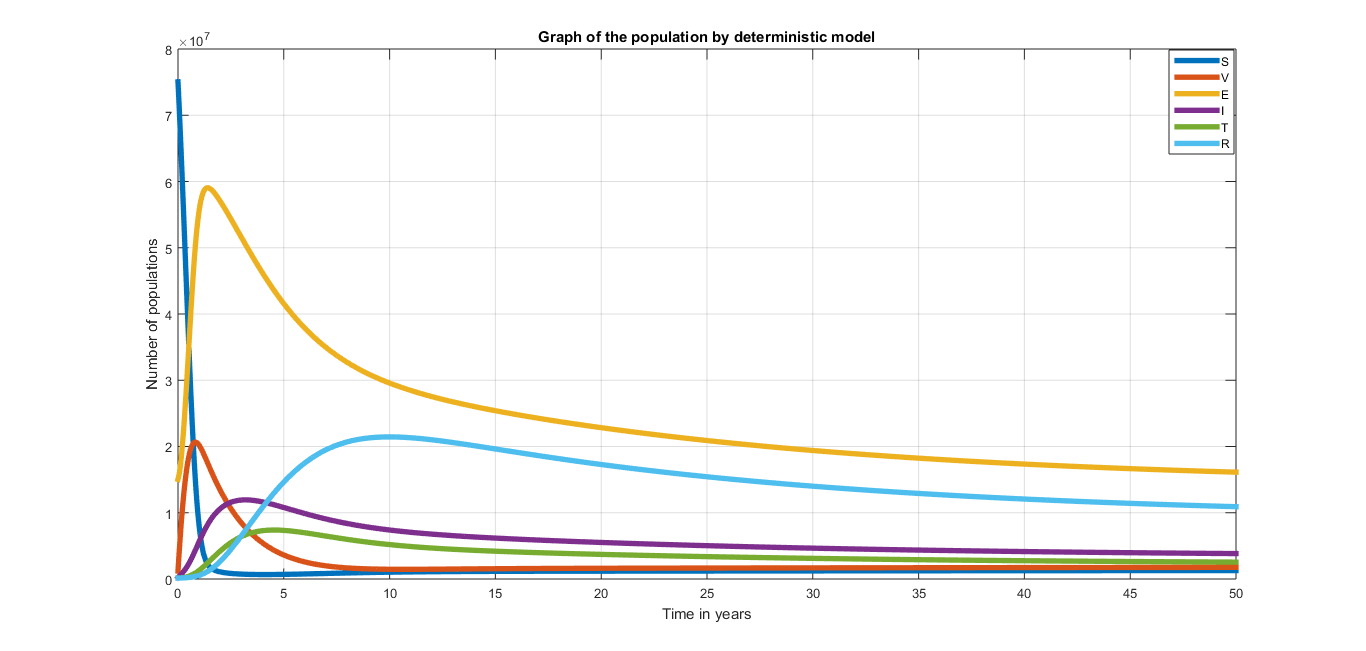}
\caption{A plot indicating model outputs and how each class evolves in time for the estimated parameters.}
\label{fig:SVEITRS_model}
\end{figure}
This finding indicates that susceptible human and vaccinated population will continue to join the exposed class of the disease; as a result, the exposed populations initially tends to increase, reflecting the growth of the disease in the population. However, over time, the exposed population starts to decline, likely because of the flow into the infected class, the treatment, and the recovery processes. We obtained $\mathscr{R}_0 = 2.67$ for the estimated model parameters, which are greater than one, indicating that the tuberculosis is continuously multiplying. 

Moreover, we have checked the effects of contact rate $(\beta)$ and infectious rate $(\gamma)$ on the exposed and infected human population, as shown in Figures \ref{fig:beta_exposed},\ref{fig:beta_infected},\ref{fig:gamma_exposed} and \ref{fig:gamma_infected} by varying $(\beta)$ and $(\gamma)$ starting from their estimated values.

\subsection{Effect of contact rate $(\beta)$ on the TB-exposed population}
According to Figure \ref{fig:beta_exposed}, as the contact rate $(\beta)$ increases, the exposed population also increases. This is because when the contact rate increases, the reproduction number also increases, and the disease persists in the population. This shows that the contact rate between susceptible and infected individuals increase. This graph also demonstrates that the contact rate has a large effect on the spread of the disease through the population. If the contact rate is high, then the rate of infection of the disease will also be high, as would be expected logically. Hence, to decrease the number of secondary infections, one can decrease the contact rate between susceptible and infected individuals. 
\begin{figure}[H]
\centering
\includegraphics[width=1.0\linewidth]{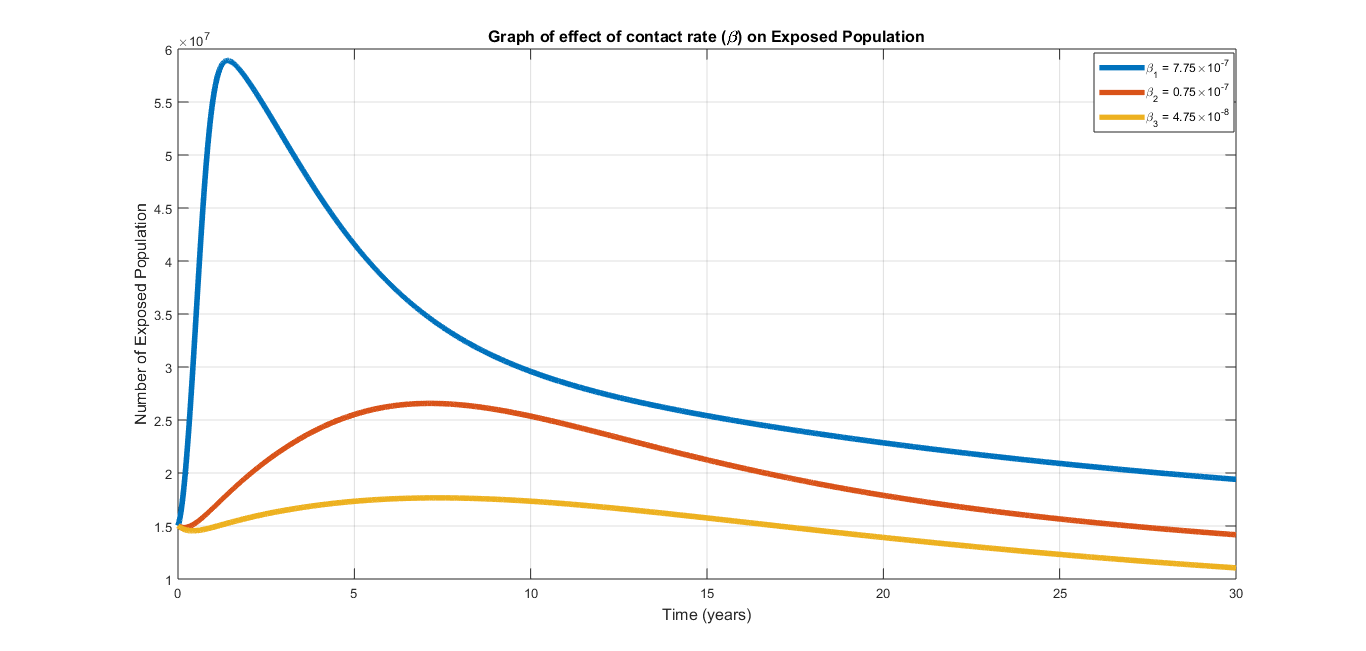}
\caption{The graphical result of the effect of the contact rate $(\beta)$ on the exposed population for several values of the contact rate $(\beta)$.}
\label{fig:beta_exposed}
\end{figure}
\subsection{Effect of contact rate $(\beta)$ on the TB-infected population}
In Figure \ref{fig:beta_infected}, we show the effects of the contact rate $(\beta)$ on the number of infected people. We obtained the numerical results by varying the value of the contact rate $(\beta)$ while keeping the other parameters constant. When the value of the contact rate $(\beta)$ increases from $4.75\times10^{-8}$ to $7.75\times10^{-7}$ there is an increase in the infected population. We see that as the contact rate increases, the number of infected people in the community increases as expected. Higher contact rates lead to more frequent interactions between susceptible and infected individuals, causing the disease to spread faster and increasing the number of infected people quickly. 
\begin{figure}[H]
\centering
\includegraphics[width=1.0\linewidth]{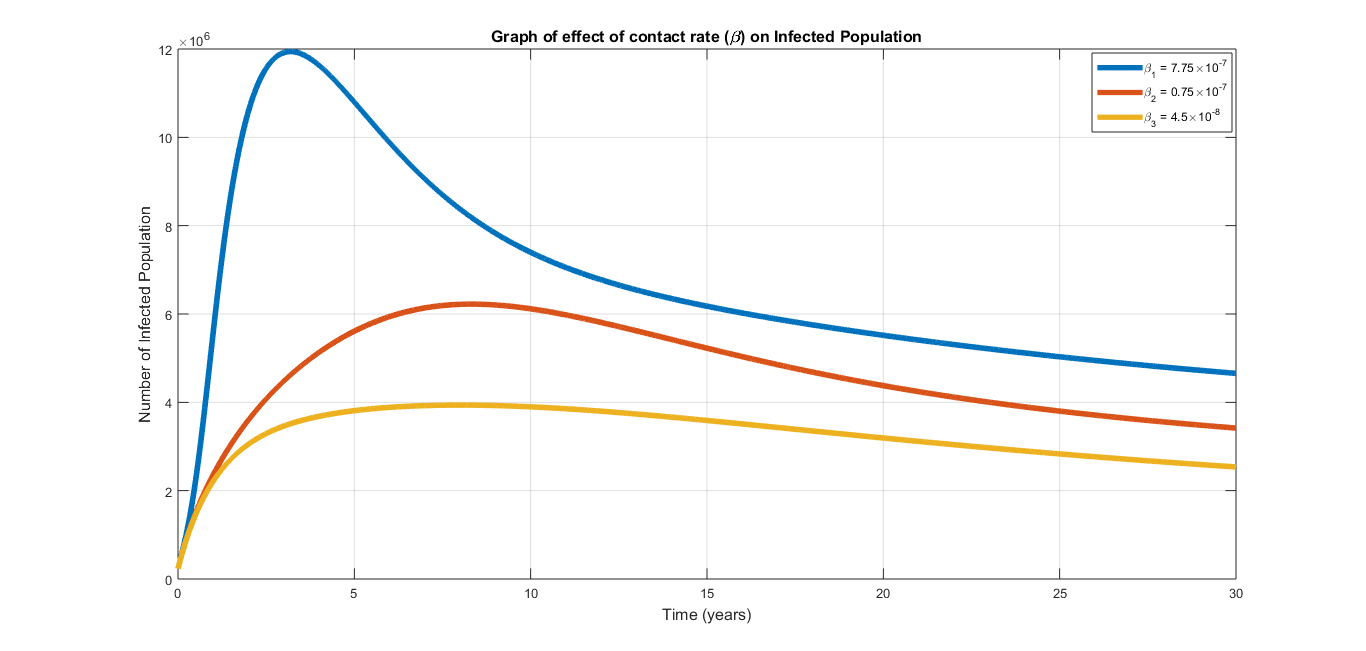}
\caption{The graphical result of the effect of contact rate $(\beta)$ on the infected population for several values of the infectious rate $(\beta)$.}
\label{fig:beta_infected}
\end{figure}
\subsection{Effect of infection rate $(\gamma)$ on the TB-exposed population} 
In Figure \ref{fig:gamma_exposed}, we show the effects of the infection rate $(\gamma)$ on the number of exposed populations. We obtained the numerical results by varying the value of the infectious rate $(\gamma)$ while keeping the other parameters constant. When the value of the infection rate $(\gamma)$ increased from 0.075 to 0.227, there was a decrease in the exposed population. This is because an increase in the infection rate results in an increase in the reproduction number; that is, the disease persists in the population. This shows the increase in infection rate between exposed and infected individuals. Hence, to decrease the number of secondary infections, one can decrease the infection rate between exposed and infected individuals.
\begin{figure}[H]
\centering
\includegraphics[width=1.0\linewidth]{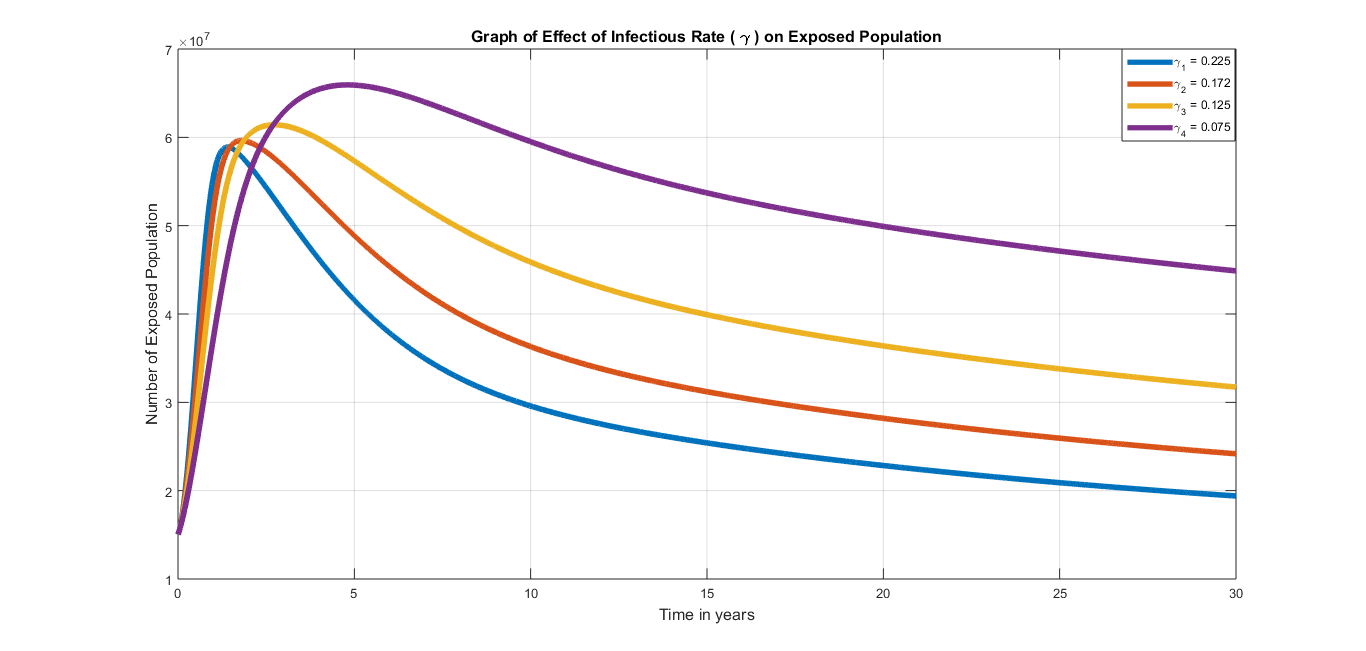}
\caption{The graphical result of the effect of the infection rate $(\gamma)$ on the exposed population for several values of the infectious rate $(\gamma)$.}
\label{fig:gamma_exposed}
\end{figure}
\subsection{Effect of the infection rate $(\gamma)$ on the TB-infected population} 
In Figure \ref{fig:gamma_infected}, we demonstrate the impact of the infection rate $\gamma$ on the number of infected individuals. The numerical results were obtained by varying the value of the infectious rate $\gamma$ while keeping the other parameters constant. In the deterministic model, when the value of the infection rate increased from 0.075 to 0.227, there was also an increase in the number of infected individuals as in the exposed cases. 
\begin{figure}[H]
\centering
\includegraphics[width=1.0\linewidth]{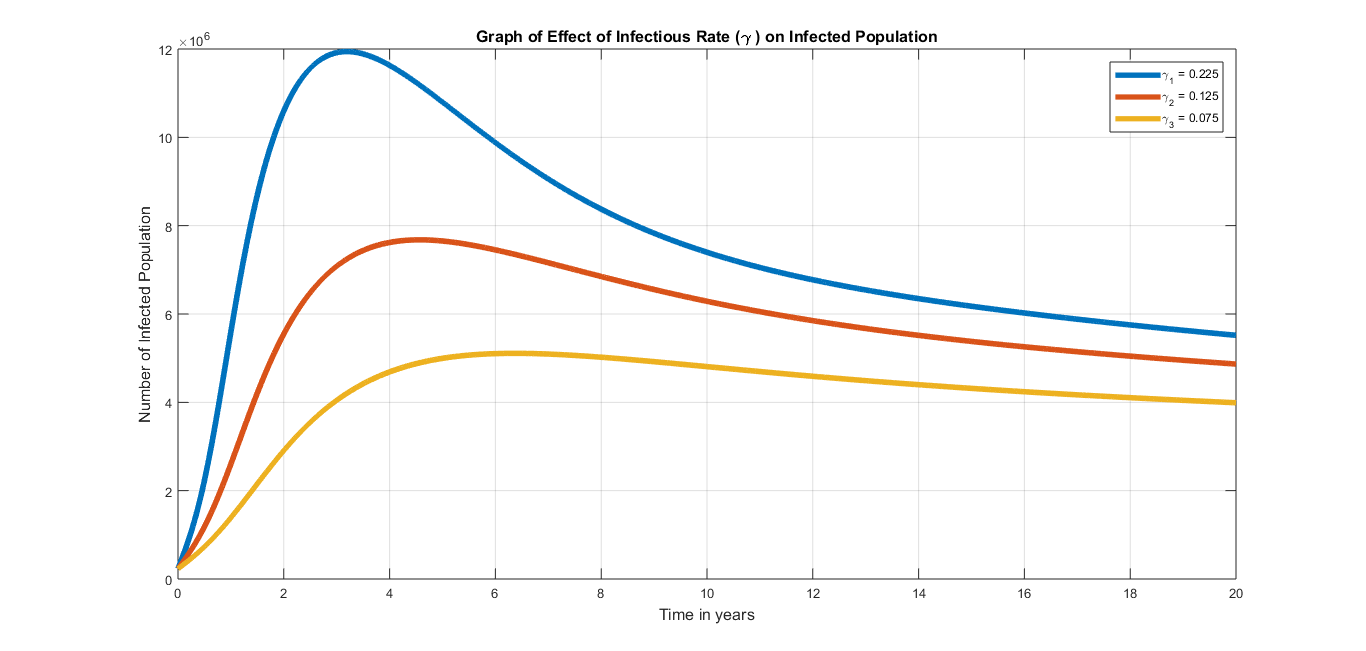}
\caption{The graphical result of the effect of the infectious rate $(\gamma)$ on the infected population for several values of the infectious rate $(\gamma)$.}
\label{fig:gamma_infected}
\end{figure}
In this case again, the increase in the infection rate results in increase in the reproduction number; that is, the disease persists in the population. Hence, to decrease the number of secondary infections, one can decrease the infection rate between latent and actively infected individuals as noted out above.

\subsection{Effect of the treatment rate $\delta$ on the TB-infected population}
In Figure \ref{fig:delta_infected}, we checked the effect of the treatment rate $\delta$ on the infected by varying the value of $\delta$ from 0.47 to 0.67 starting from its estimated value and keeping the rest of the parameters constants. The result revealed that the infectious population decreases as the treatment rate of infected individuals increases; in the long run, this may lead to the eradication of the disease from the community. Thus, we can conclude that increasing the treatment rate $\delta$, can play a vital role in reducing the incidence of tuberculosis from the community locally.
\begin{figure}[H]
\centering
\includegraphics[width=1.0\linewidth]{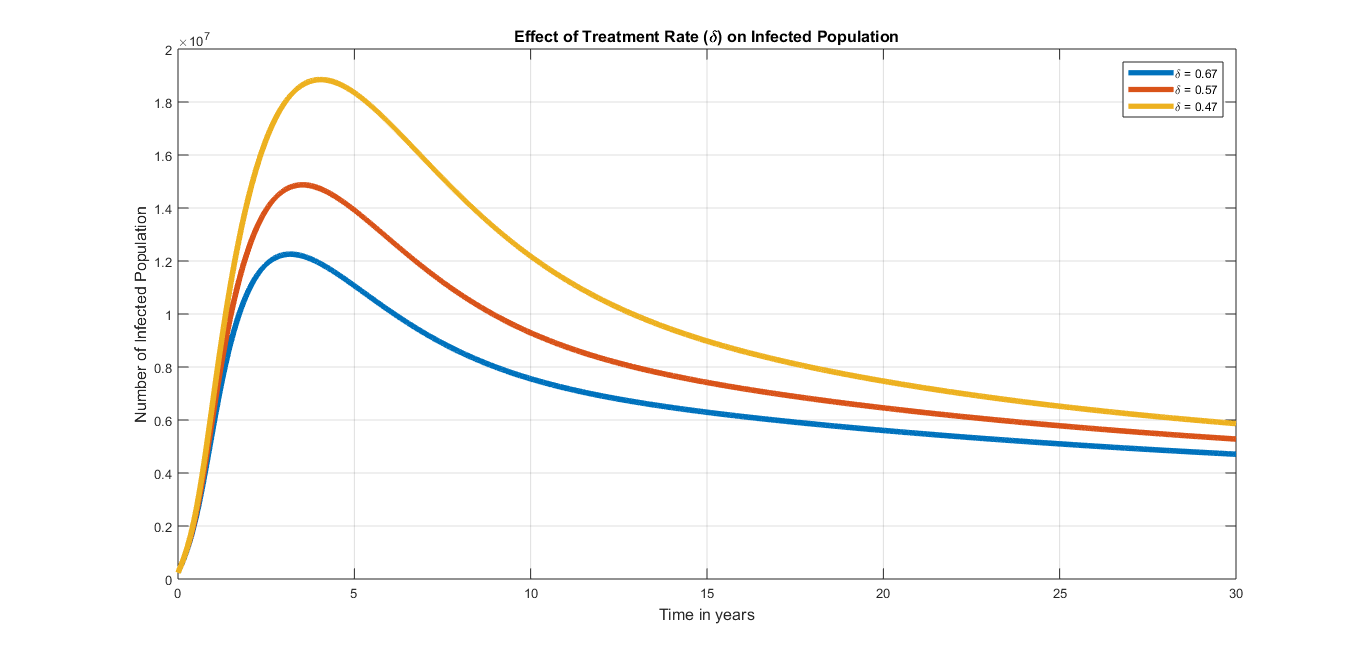}
\caption{The graphical result of the effect of treatment rate $(\delta)$ on the infected population for several values of the infectious rate $(\delta)$.}
\label{fig:delta_infected}
\end{figure}

\section{Conclusions} \label{sec_nine}
In this study, we model tuberculosis transmission dynamics in Ethiopia by incorporating imperfect vaccination and treatment for active TB-infected classes. We estimated the contact and infection rate parameters through minimum search algorithms and maximum likelihood estimation techniques using Ethiopian tuberculosis data from 2011-2021. The model output is illustrated via the parameters estimated from the Ethiopian data. The model analysis revealed the existence of a unique, bounded, and positively invariant solution. We find the disease-free equilibrium (DFE) and endemic equilibrium (EE) states and use the next-generation matrix operator to find the basic reproduction number $\mathscr{R}_0$. The analytical results show that the DFEs are locally and globally asymptotically stable when $\mathscr{R}_V<1$, whereas the EEs are locally and globally asymptotically stable if $\mathscr{R}_V>1$. This study revealed that contact rate ($\beta$), recruitment rate ($\Lambda$), the loss of protection for the vaccination rate ($\theta$), and rate of progression from the exposed class to the infectious class ($\gamma$) were the most important factors. These parameters have the greatest impact on increasing the reproductive number, significantly affecting TB spread in Ethiopia. Conversely, parameters $\pi, \mu, \delta$, and $\tau$ have the greatest influence on decreasing the reproductive number, leading to tuberculosis elimination.

Furthermore, parameter estimation for the contact and infection rates was performed from the incidence data, providing a more accurate parameter values specific to Ethiopia for the proposed TB dynamics model. Overall, the numerical simulation results, as discussed in the previous sections, suggest that treating infectious individuals significantly affects the eradication of TB transmission for active TB patients locally. We conclude that increasing the contact rate leads to an increase in the TB-exposed population. Similarly, increasing the infection rate has led to an increase in the TB-infected population. The TB-infected population may become large, and treating such a large population is also a challenge economically for developing countries such as Ethiopia. Therefore, treating such a large number of patients may lead to an increase in the number of drug-resistant tuberculosis cases. Therefore, focusing on controlling contact and infection rate parameters is crucial, in addition to considering educational and awareness programs to reduce the transmission of TB bacteria from infected to susceptible populations.

By implementing this multifaceted approach that combines reduced contact rates and increased vaccination coverage, policymakers and healthcare workers can develop targeted and impactful interventions to effectively control and eventually eliminate TB in Ethiopia and other high-burden settings. 

\section*{ACKNOWLEDGMENTS}
Yassin Tesfaw Abebe acknowledges support from the International Mathematical Union (IMU) and the Graduate Research Assistantships in Developing Countries (GRAID) Program.

\section*{AUTHOR DECLARATIONS}
\subsection*{Conflict of interest}
No potential conflicts of interest were reported by the author(s).

\subsection*{Author Contributions}
\textbf{Moksina Seid:} Conceptualization (equal); investigation (equal); writing – original draft (equal); writing – review \& editing (equal); software (equal).  \textbf{Yassin Tesfaw Abebe:} Writing – original draft (equal); Software (equal); writing – review \& editing (equal). \textbf{Abdu Mohammed Seid}: Conceptualization (equal); investigation (equal); and supervision.

\section*{DATA AVAILABILITY}
The data used in this paper are accessed online from \cite{worldbank_tb_ethiopia, worldbank2023, worldometers2024, tbdiah, tradingeconomics2023, worldbank_tuberculosis, tradingeconomics}.

\bibliography{refs}

\end{document}